\documentclass[12pt]{article}

\usepackage{lmodern}

\usepackage{amssymb}
\usepackage{amsmath}
\usepackage{amsthm}
\usepackage{bbm}

\usepackage{pdflscape}

\usepackage{tikz}
\usetikzlibrary{positioning}
\usetikzlibrary{arrows.meta,shapes,calc}
\usepackage{geometry}
\usepackage{subcaption}
\usepackage{float}

\usepackage{graphicx}

\usepackage{booktabs}
\usepackage{multirow}
\usepackage{setspace} 

\RequirePackage[utf8]{inputenc} 
\usepackage{xcolor}
\usepackage{csquotes}  
\usepackage{enumerate}
\usepackage{comment}

\usepackage[colorlinks=false]{hyperref}
\hypersetup{colorlinks, citecolor=blue, urlcolor=blue, linkcolor=blue} 
\usepackage{bookmark} 

\usepackage{natbib}

\theoremstyle{plain} \newtheorem*{definition*}{Definition}
\theoremstyle{plain} \newtheorem{assumption}{Assumption}

\newtheorem{assumptionprime}{Assumption}

\newtheorem{proposition}{Proposition}
\newtheorem{lemma}{Lemma}
\newtheorem{corollary}{Corollary}
\theoremstyle{definition} 

\DeclareMathOperator{\Cov}{Cov}

\newcommand{\DIIV}{\mathrm{DIIV}}
\newcommand{\RF}{\mathrm{RF}}
\newcommand{\FS}{\mathrm{FS}}

\title{Toggling the Defiers to Relax Monotonicity: \\ The Difference-in-Instrumental-Variables Estimand}%

\author{Johann Caro-Burnett\thanks{\protect\linespread{1}\protect\selectfont Hiroshima University. \href{mailto:johanncb@hiroshima-u.ac.jp}{johanncb@hiroshima-u.ac.jp.}} }

\date{\today}

\begin{document}
\maketitle

\vspace{-1cm}
\begin{abstract}\footnotesize
\noindent Standard instrumental variables (IV) methods identify a Local Average Treatment Effect under monotonicity, which rules out defiers. In many empirical environments, however, distinct instruments may induce heterogeneous and even opposing behavioral responses. This paper introduces the Difference-in-Instrumental-Variables (DIIV) estimand, which exploits two instruments with opposing compliance patterns to recover a point-identified and behaviorally interpretable causal effect without imposing monotonicity. The estimand yields a convex combination of the marginal treatment effects on compliers and defiers, with weights reflecting differential shifts in treatment take-up across instruments. When monotonicity holds, DIIV coincides with the standard IV estimand. The approach can be implemented using simple linear transformations and standard two-stage least squares procedures. Applications using replication data illustrate its applicability in practice.

\bigskip

\noindent {\bf Keywords:} instrumental variables; monotonicity; defiers; encouragement design; causal identification
\end{abstract}

\vfill

\pagebreak

\section*{Highlights}

\begin{itemize}
    \item This paper develops a method to estimate a Local Average Treatment Effect (LATE) when defiers are present.  
    \item The \textit{Difference-in-Instrumental-Variables} (DIIV) is introduced: a framework that leverages the contrast between two instruments to measure a convex combination of complier and defier effects. 
    \item Instead of imposing monotonicity, which rules out psychological reactance, the DIIV only requires two instruments that shift the shares of compliers and defiers in opposite directions. This feature can, arguably, be satisfied by design.
    \item A \textit{micro-foundation} for treatment take-up, aligned with information economics, is developed.
    \item It is shown that a simple variable transformation allows estimation and inference using standard tools such as two-stage least squares (2SLS).  
    \item The baseline design involves two separate encouragements in parallel experiments, but the DIIV framework extends naturally to more general settings. 
    \item Using replication data from prior studies, we illustrate how the DIIV can be applied in practice.  
\end{itemize}

\pagebreak

\section{Introduction}

A central insight of \citet{angrist1996identification} is that instrumental variables (IV) methods identify causal effects under a set of assumptions. Specifically, the IV estimand recovers a Local Average Treatment Effect (LATE) for compliers when four conditions hold: relevance, exogeneity, exclusion, and monotonicity. The monotonicity assumption requires that the instrument shifts treatment take-up in the same direction for all units, thereby ruling out the existence of defiers.

In many real-world settings, however, defiers are both plausible and empirically salient. This may be especially prominent in settings involving \emph{divisive, controversial, or authority-laden domains} such as religion, politics, or, more recently, the deployment of artificial intelligence systems. Psychological research on reactance \citep{Brehm1966,MironBrehm2006} documents that some individuals respond to encouragements or recommendations by actively resisting them. Thus, defiers are not rare anomalies but systematic behavioral responses that cannot simply be ignored. Consequently, in such environments, assuming monotonicity may be inappropriate and inference may become difficult to interpret.

Importantly, environments in which behavioral responses differ across instruments are not confined to controversial domains. They arise whenever distinct encouragements operate through different incentives, framings, informational content, or authority structures. Such designs are common in experimental and quasi-experimental work featuring multiple treatment arms, varying encouragement intensities, competing messages, or alternative policy thresholds. In these settings, the possibility that some individuals comply with one instrument but resist another is a structural feature of the design rather than a pathological exception.

This paper introduces the \emph{Difference-in-Instrumental-Variables} (DIIV) estimand, which directly incorporates the existence of defiers into causal analysis. The core idea is to use \emph{two instruments that shift compliance and defiance in opposite directions}. By contrasting how these instruments affect treatment take-up, it becomes possible to identify an interpretable causal effect even when monotonicity fails.\footnote{\label{fn:riddle} The logic parallels a classic two-guardians riddle, popularized in the film \emph{Labyrinth} and in logic puzzles. A traveler faces two doors, one safe and one unsafe, guarded by one truth-teller and one liar, and may ask only one question. By asking a logically inverted question (such as what the other guardian would say) one can recover the correct door regardless of which guardian is addressed. In a similar fashion, contrasting two instruments with opposite compliance patterns allows identification that does not depend on knowing whether a given unit is a complier or a defier. See Proposition \ref{result:2sls} for a DIIV representation that parallels the mathematical-logical solution to the two-guardians riddle.} Our setup is developed with a focus on experimental design, where researchers are aware that potential defiers exist and may deliberately induce different shifts in compliance and defiance by designing distinct instruments.

Crucially, the proposed estimand does not require researchers to observe or classify behavioral types. It relies only on variation across instruments with distinct behavioral orientation, a feature already present in many empirical designs. As a result, DIIV can be implemented in settings that already contain multiple encouragement arms or alternative policy levers, without imposing additional structure beyond standard IV conditions. More generally, the approach applies whenever two instruments induce opposing behavioral responses in compliance and defiance while satisfying relevance, exogeneity, and exclusion.

In addition, we develop a simple \emph{micro-foundation} for treatment take-up that explains why individuals from the same population may behave differently depending on the instrument. Different instruments convey different information, incentives, or framings, leading the same individual to take the treatment under one instrument and not under another. This micro-foundation clarifies why marginal treatment effects may depend on \textit{unobserved types} rather than on \textit{observed take-up behavior} alone, and provides an explicit behavioral interpretation of the objects identified by DIIV.

A key result of the paper is that the DIIV estimand always yields a \emph{convex combination} of the marginal treatment effects on compliers and defiers, providing an informative and interpretable causal object even when monotonicity fails.\footnote{See Proposition~\ref{result:maininden} for the formal expression.} In contrast to standard multi-instrument 2SLS settings, where weights may be negative or lack a clear behavioral interpretation \citep{mogstad2021causal}, the DIIV weights correspond to population shares of marginal behavioral responses. In this sense, the estimand identifies the treatment effect for units whose take-up behavior responds \emph{differentially} across the two instruments. When combined with plausible sign restrictions or stability conditions on behavioral groups, it serves as a transparent \emph{location statistic} for the underlying marginal effects.

Because the object is defined in terms of differential responsiveness across instruments, it has a natural interpretation in design-based work: it measures the average outcome change associated with shifting from a weaker-compliance instrument to a stronger-compliance one. This interpretation is particularly relevant in environments where researchers compare alternative encouragement strategies, messaging regimes, or policy intensities. Under plausible sign restrictions or stability conditions on behavioral groups, DIIV also serves as a \emph{location statistic} for the underlying marginal effects. DIIV therefore extends the IV toolkit to settings where behavioral heterogeneity induced by incentives, framing, or reactions to authority is central, while preserving a point-identified and behaviorally interpretable effect.


\subsection*{Related Literature}

The instrumental variables framework classifies individuals into four types: always-takers, never-takers, compliers, and defiers. The standard LATE framework assumes monotonicity, thereby ruling out defiers and enabling point identification of complier effects \citep{imbens1994,angrist1996identification}. While analytically convenient, monotonicity can be substantively strong. A large literature has therefore examined its empirical content and implications.

One strand develops empirical diagnostics for IV validity and related identifying restrictions, including inequality restrictions, moment conditions, and overidentification tests when multiple instruments are available \citep{hansen1982,huber2015,kitagawa2015,mourifie2017}. These approaches clarify testable implications of the IV model but do not in themselves recover interpretable effects when defiers are present.

When monotonicity fails, point identification of the LATE generally breaks down. Partial identification approaches derive bounds on average treatment effects that explicitly allow for defiers \citep{BalkePearl1997,manski1990}, with subsequent refinements incorporating covariates and additional structure \citep{richardson2010,swanson2015}. More recent contributions develop sensitivity analyses that parameterize monotonicity violations and assess robustness to different shares of defiers \citep{noack2021}, or attempt to estimate or bound the prevalence of defiers directly \citep{christy2024countingdefiersdesignbasedmodel}. These methods provide valuable information about identification limits but typically replace point identification with bounds or sensitivity regions.

A related literature proposes weaker versions of monotonicity. Stochastic monotonicity permits certain forms of non-deterministic compliance behavior while preserving interpretable weighted effects \citep{small2017}, and other structured relaxations accommodate defiers under additional assumptions \citep{dechaisemartin2017,dahl2023}. The most closely related work is \citet{mogstad2021causal}, who show that with multiple instruments and ``partial monotonicity'' (monotonicity by each instrument, not necessarily by the vector of instruments), 2SLS recovers a positively weighted average of instrument-specific LATEs. Their work highlights the interpretive complexity that can arise when aggregating instruments, even under weakened monotonicity.

In contrast to bounds, sensitivity analyses, or partial monotonicity approaches, DIIV exploits two instruments with opposing compliance patterns to recover a point-identified estimand with a direct behavioral interpretation, without imposing monotonicity. By focusing on differential responsiveness across instruments, it provides a simple and general extension of the IV framework to environments where defiance is a structural feature of the design.


\section{Illustrative Design: Two Parallel Instruments}\label{sec:parallel}

We first develop a basic setup with two instruments that encourage treatment take-up, and shift compliance and defiance in opposite directions. In later sections, we extend the analysis to more general settings. 

Let $I$ be a population of units $i\in I$. By design, the population is partitioned into two ex ante identical subpopulations, $I_1$ and $I_2$, where each group is potentially exposed to an exclusive instrument, also referred to as a nudge.\footnote{Here, ``identical'' refers to underlying type composition and potential outcomes; realized compliance behavior may differ across subpopulations due solely to different framing of each instrument.} For $j\in\{1,2\}$, the exclusive instrument in subpopulation $I_j$ is denoted $z_{i,j}\in\{0,1\}$ and is exogenously assigned with the same probability across subpopulations. Instrument $z_{i,j}$ encodes not only assignment but also the framing associated with instrument $j$.

Define $D_{i,j}(z)\in\{0,1\}$, for $z\in\{0,1\}$, as the potential treatment take-up in $I_j$, which depends both on the assignment $z$ and on the framing $j$. The observed treatment status is $d_i=D_{i,j}(z_{i,j})$, which, together with the potential outcomes $Y_i(d)$ for $d\in\{0,1\}$, yields the observed outcome
\begin{equation}
y_i = d_i Y_i(1) + (1-d_i)Y_i(0).
\end{equation}
In particular, the distribution of $y_i$ may depend on the framing $j$ only through the treatment decision $d_i$ and not through frame-specific potential outcomes.\footnote{From the notation, the exclusion restriction is satisfied; see Assumption \ref{assumption:exclu}.}

Based on how units behave in response to the instrument, they can be one of four types $\theta_i\in\{A,N,C,F\}$, which partition $I$, $I_1$, and $I_2$. Let $\pi_\theta$ denote the population mass of type $\theta$. Types $A$ and $N$ are the traditional \emph{always-takers} and \emph{never-takers}.\footnote{Types $A$ and $N$ are not essential and could be assumed to have zero mass.} Type $C$ consists of units that find value in the nudge and are \emph{potential compliers}, whereas type $F$ consists of units that find disvalue in the nudge and are \emph{potential defiers}. Depending on responsiveness to framing $j$, subpopulations $I_1$ and $I_2$ may exhibit different shares of units behaving as compliers ($d_i=z_{i,j}$) or as defiers ($d_i=1-z_{i,j}$).

More precisely, given an instrument with framing $j$, potential compliers (type $C$) take the treatment iff
\begin{equation}
v_C(j) z_{i,j} + \eta_i \ge \underline v,
\end{equation}
where $\eta_i$ is an idiosyncratic shock with an atomless distribution; $v_C(j)\ge 0$ captures responsiveness to framing $j$; and $\underline v$ is a threshold.\footnote{In a related approach, \citet{kowalski2023behaviour} exploit an ordering of the slopes of marginal treated and untreated outcome as functions of covariates to bound treatment effects for always takers.} Conditional on type $C$, units with $\eta_i\in[\underline v - v_C(j),\underline v)$ are contingent compliers, denoted $C_j^C$, with conditional mass $\phi_C^C(j)$. Those with $\eta_i\ge\underline v$ behave as contingent always-takers ($C_j^A$), with conditional mass $\phi_C^A(j)$, while those with $\eta_i<\underline v-v_C(j)$ behave as contingent never-takers ($C_j^N$), with conditional mass $\phi_C^N(j)$. These satisfy $\phi_C^C(j)+\phi_C^A(j)+\phi_C^N(j)=1$ (see Figure \ref{fig:microc}).

Similarly, given an instrument with framing $j$, potential defiers (type $F$) take the treatment iff
\begin{equation}
v_F(j) z_{i,j} + \eta_i \ge \underline v,
\end{equation}
where $v_F(j)\le 0$. Conditional on type $F$, units with $\eta_i\in[\underline v,\underline v - v_F(j))$ are contingent defiers, denoted $F_j^F$, with conditional mass $\phi_F^F(j)$. Those with $\eta_i\ge\underline v - v_F(j)$ behave as contingent always-takers ($F_j^A$), with conditional mass $\phi_F^A(j)$, while those with $\eta_i<\underline v$ behave as contingent never-takers ($F_j^N$), with conditional mass $\phi_F^N(j)$. These satisfy $\phi_F^F(j)+\phi_F^A(j)+\phi_F^N(j)=1$ (see Figure \ref{fig:microf}).

\begin{figure}[ht]
\centering
\begin{subfigure}{\linewidth} 
\centering
\begin{tikzpicture}[scale=1]

  \fill[gray!15] (-3,0) rectangle (3,1);

  \fill[gray!30] (3,0) rectangle (9,1);

  \node at (-2,0.5) {non-takers};
  \node at (8,0.5) {takers};

  \draw[<->] (-3.5,0) -- (9.5,0) node[right] {$\eta_i$};

  \draw[dashed] (3,0) -- (3,1);
  \draw (3,0.08) -- (3,-0.08);
  \node[below] at (3,-0.1) {$\underline{v}$};
\end{tikzpicture}
\subcaption{$z=0$.} \label{fig:microz0}
\end{subfigure}

\vspace{0.4cm}

\begin{subfigure}{\linewidth} 
\centering
\begin{tikzpicture}[scale=1]

  \fill[gray!15] (-3,0) rectangle (2,1);

  \fill[gray!60] (1,0) rectangle (3,1);

  \fill[gray!30] (3,0) rectangle (9,1);

  \node at (-2,0.5) {non-takers};
  \node at (2,0.5) {compliers};
  \node at (8,0.5) {takers};

  \draw[<->] (-3.5,0) -- (9.5,0) node[right] {$\eta_i$};

  \draw[dashed] (1,0) -- (1,1);
  \draw[dashed] (3,0) -- (3,1);

  \draw (1,0.08) -- (1,-0.08);
  \draw (3,0.08) -- (3,-0.08);
  \node[below] at (1,-0.1) {$\underline{v} - v_C(g)$};
  \node[below,yshift=-0.75ex] at (3,-0.1) {$\underline{v}$};
\end{tikzpicture}
\subcaption{$z=1$, $\theta = C$ potential compliers.} \label{fig:microc}
\end{subfigure}

\vspace{0.4cm}

\begin{subfigure}{\linewidth} 
\centering
\begin{tikzpicture}[scale=1]

  \fill[gray!15] (-3,0) rectangle (3,1);

  \fill[gray!60] (3,0) rectangle (4.5,1);

  \fill[gray!30] (4.5,0) rectangle (9,1);

  \node at (-2,0.5) {non-takers};
  \node at (3.75,0.5) {defiers};
  \node at (8,0.5) {takers};

  \draw[<->] (-3.5,0) -- (9.5,0) node[right] {$\eta_i$};

  \draw[dashed] (3,0) -- (3,1);
  \draw[dashed] (4.5,0) -- (4.5,1);

  \draw (3,0.08) -- (3,-0.08);
  \draw (4.5,0.08) -- (4.5,-0.08);
  \node[below,yshift=-0.75ex] at (3,-0.1) {$\underline{v}$};
  \node[below] at (4.7,-0.1) {$\underline{v} - v_F(g)$};
\end{tikzpicture}
\subcaption{$z=1$, $\theta = F$ potential defiers.} \label{fig:microf}
\end{subfigure}
\caption{Thresholds for treatment take-up for potential compliers and potential defiers.}
\end{figure}
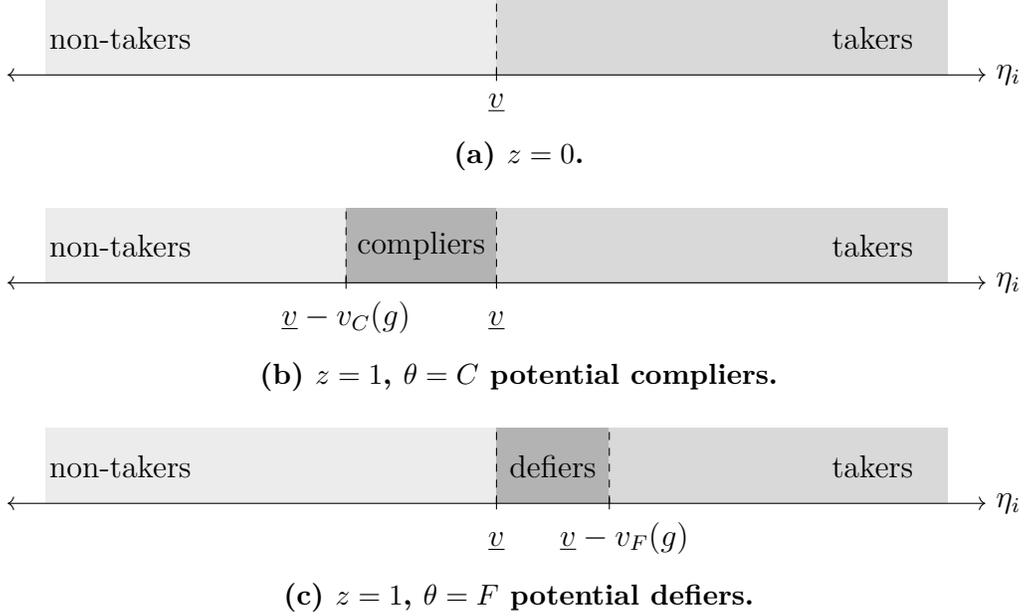

Thus, for each framing $j\in\{1,2\}$, we obtain a partition
\[
I_j=\{A_j,N_j,C_j^C,C_j^A,C_j^N,F_j^F,F_j^A,F_j^N\}.
\]
We pay special attention to the unconditional shares of behavioral compliers and behavioral defiers, which depend on the framing. We denote these by
\begin{equation}
p_C^j := \pi_C \phi_C^C(j) \text{ and } p_F^j := \pi_F \phi_F^F(j).
\end{equation}

Although we do not impose the existence of defiers (i.e., $F=\emptyset$ is allowed), we develop a method to estimate local average treatment effects when they do exist. To estimate the effect of treatment take-up on outcomes, we impose the following assumptions.

\begin{assumption}\label{assumption:exo}
\textbf{Exogeneity:} For each instrument $z_j$, $j\in\{1,2\}$,
\begin{equation}
z_j \;\perp\; \left( Y_i(0),Y_i(1),D_{i,j}(0),D_{i,j}(1) \right).
\end{equation}
\end{assumption}

\begin{assumption}\label{assumption:exclu}
(i) \textbf{Exclusion:} Conditional on treatment take-up and within each subpopulation $j$, the potential outcome does not depend on assignment:
\begin{equation}
Y_i(d\mid z_{i,j}=1) = Y_i(d\mid z_{i,j}=0).
\end{equation}
(ii) \textbf{Type-invariant expected marginal effects:} For any $\theta\in\{A,N,C,F\}$,
\begin{equation}
\tau_\theta := \mathbb{E}\big[Y_i(1)-Y_i(0)\mid \theta_i=\theta\big]
\end{equation}
does not depend on $j$.
\end{assumption}

The next assumption replaces monotonicity while also serving as the relevance condition.

\begin{assumption}\label{assumption:main}
(i) \textbf{Opposing shifts:} $p_C^1 \ge p_C^2$ and $p_F^1 \le p_F^2$.
(ii) \textbf{Relevance:} At least one inequality holds strictly:
\begin{equation}
(p_C^1 - p_C^2) + (p_F^2 - p_F^1) > 0.
\end{equation}
\end{assumption}

For $j\in\{1,2\}$, define the reduced form and first stage as
\begin{equation}
RF_j := \mathbb{E}[y_i \mid z_{i,j}=1] - \mathbb{E}[y_i \mid z_{i,j}=0]
\end{equation}
and
\begin{equation}
FS_j := \mathbb{E}[d_i \mid z_{i,j}=1] - \mathbb{E}[d_i \mid z_{i,j}=0].
\end{equation}

\begin{lemma} \label{result:rffs}
Under Assumption \ref{assumption:exclu},
\begin{equation}
RF_j = p_C^j \tau_C - p_F^j \tau_F \text{ and }
FS_j = p_C^j - p_F^j.
\end{equation}
\end{lemma}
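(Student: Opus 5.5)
Within subpopulation $I_j$ we decompose by the behavioral partition $\{A_j,N_j,C_j^C,C_j^A,C_j^N,F_j^F,F_j^A,F_j^N\}$ and use the law of total expectation. Exogeneity (Assumption \ref{assumption:exo}) means conditioning on $z_{i,j}=z$ does not change the joint distribution of potential outcomes and potential take-ups. Hence it does not change the shares of the behavioral cells or the conditional distribution of $(Y_i(0),Y_i(1))$ within each cell. Exclusion (Assumption \ref{assumption:exclu}(i)) lets us write $y_i=D_{i,j}(z)Y_i(1)+(1-D_{i,j}(z))Y_i(0)$, with potential outcomes not indexed by $z$. Therefore
\[
\mathbb{E}[y_i\mid z_{i,j}=z]=\sum_{g}\Pr(g)\,\mathbb{E}\big[D_{i,j}(z)Y_i(1)+(1-D_{i,j}(z))Y_i(0)\mid g\big],
\]
where the sum is over cells $g$ of the partition. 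The same expression with $y_i$ replaced by $d_i$ gives $\mathbb{E}[d_i\mid z_{i,j}=z]=\sum_g \Pr(g)\,\mathbb{E}[D_{i,j}(z)\mid g]$.

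Next we difference the $z=1$ and $z=0$ expressions cell by cell. In $A_j$, $C_j^A$, $F_j^A$ the take-up is $1$ for both $z$, and in $N_j$, $C_j^N$, $F_j^N$ it is $0$ for both, so these cells contribute nothing to either $RF_j$ or $FS_j$. In $C_j^C$, $D_{i,j}(1)-D_{i,j}(0)=1$, giving contributions $p_C^j$ to the first stage and $p_C^j\,\mathbb{E}[Y_i(1)-Y_i(0)\mid C_j^C]$ to the reduced form. In $F_j^F$ the difference is $-1$, giving $-p_F^j$ and $-p_F^j\,\mathbb{E}[Y_i(1)-Y_i(0)\mid F_j^F]$. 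Here $\Pr(C_j^C)=\pi_C\phi_C^C(j)=p_C^j$ and $\Pr(F_j^F)=p_F^j$ by definition. This yields $FS_j=p_C^j-p_F^j$ immediately.

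The main obstacle is replacing the cell-level effects $\mathbb{E}[Y_i(1)-Y_i(0)\mid C_j^C]$ and $\mathbb{E}[\cdot\mid F_j^F]$ by the type-level $\tau_C$ and $\tau_F$. The type $\tau_\theta$ is defined conditional on $\theta_i=\theta$, whereas contingent compliers are selected on $\eta_i$. We would handle this by reading Assumption \ref{assumption:exclu}(ii), together with the micro-foundation, as saying that the expected marginal effect depends only on the unobserved type and not on the framing-induced cell. Concretely, $\eta_i$ affects take-up but not the expected gain within a type, so $\mathbb{E}[Y_i(1)-Y_i(0)\mid \theta_i=C,\eta_i\in\cdot]=\tau_C$, and similarly for $F$. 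This independence of the effect from the cell must be stated explicitly as the interpretation being used. Substituting gives $RF_j=p_C^j\tau_C-p_F^j\tau_F$.
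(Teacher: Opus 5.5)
Your proof is correct and follows essentially the same route as the paper: decompose $I_j$ by type and by the $\eta_i$-cells, drop the cells whose take-up does not respond to $z_{i,j}$, and read off the $+1$ and $-1$ contributions of $C_j^C$ and $F_j^F$ with masses $p_C^j$ and $p_F^j$. The paper also relies on exogeneity, and on replacing the cell-level effects by $\tau_C$ and $\tau_F$, only implicitly; your explicit condition that the expected gain is mean-independent of $\eta_i$ within type is exactly what that final step of its argument needs.
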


\begin{proof}
See the Appendix.
\end{proof}

Lemma \ref{result:rffs} shows how the standard IV estimand $RF_j/FS_j$ is biased when $p_F^j>0$. We therefore propose an alternative estimand that exploits differential shifts in compliance and defiance.

\begin{definition*}
Let us define the DIIV estimand as:
\begin{equation}
\tau_{\mathrm{DIIV}} := \frac{RF_1 - RF_2}{FS_1 - FS_2},
\end{equation}
where $FS_1 \neq FS_2$, which is ensured by Assumption \ref{assumption:main}.
\end{definition*}

\begin{proposition}\label{result:maininden}
If Assumptions \ref{assumption:exo}, \ref{assumption:exclu} and \ref{assumption:main} hold, then the DIIV estimand identifies:
\begin{equation}
\tau_{\mathrm{DIIV}} = \lambda \tau_C + (1-\lambda)\tau_F,
\end{equation}
where
\begin{equation}
\lambda
=
\frac{p_C^1 - p_C^2}{(p_C^1 - p_C^2)-(p_F^1 - p_F^2)}
\in [0,1].
\end{equation}
\end{proposition}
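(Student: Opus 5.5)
The plan is to plug the expressions for the reduced forms and first stages from Lemma \ref{result:rffs} into the DIIV ratio and then regroup. Assumption \ref{assumption:exo} gives that $RF_j$ and $FS_j$ equal the corresponding population contrasts within each subpopulation. Lemma \ref{result:rffs}, which uses Assumption \ref{assumption:exclu}(i), then yields $RF_j = p_C^j\tau_C - p_F^j\tau_F$ and $FS_j = p_C^j - p_F^j$. A key point is that $\tau_C$ and $\tau_F$ carry no index $j$. This is exactly Assumption \ref{assumption:exclu}(ii), and it is what allows these terms to be factored out when we difference across instruments.

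Differencing gives
\begin{equation*}
RF_1 - RF_2 = (p_C^1-p_C^2)\tau_C - (p_F^1-p_F^2)\tau_F
\end{equation*}
and
\begin{equation*}
FS_1 - FS_2 = (p_C^1-p_C^2) - (p_F^1-p_F^2).
\end{equation*}
Set $a := p_C^1-p_C^2$ and $b := p_F^2-p_F^1$. Assumption \ref{assumption:main}(i) gives $a\ge 0$ and $b\ge 0$. Assumption \ref{assumption:main}(ii) gives $a+b>0$, so the denominator $FS_1-FS_2=a+b$ is strictly positive and the ratio is well defined. Then
\begin{equation*}
\tau_{\mathrm{DIIV}} = \frac{a\tau_C + b\tau_F}{a+b} = \lambda\tau_C + (1-\lambda)\tau_F,
\end{equation*}
where $\lambda = a/(a+b)$. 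This matches the stated formula, because $-(p_F^1-p_F^2)=b$. Since $a,b\ge 0$ and $a+b>0$, it follows that $\lambda\in[0,1]$.

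No step is really an obstacle. The only points needing care are the sign bookkeeping, so that the defier term enters as $+b\tau_F$ with $b\ge0$, and checking that the type-invariance in Assumption \ref{assumption:exclu}(ii) really justifies using a common $\tau_C,\tau_F$ across the two subpopulations. That check relies on $I_1$ and $I_2$ being ex ante identical, so the type-conditional effects are the same objects in both.
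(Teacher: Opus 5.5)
Your proposal is correct and matches the paper's proof essentially step for step: substitute Lemma~\ref{result:rffs} into the DIIV ratio, take differences, use Assumption~\ref{assumption:main} to get a strictly positive denominator, and read off $\lambda$. Your explicit check that $\lambda=a/(a+b)\in[0,1]$ because $a,b\ge 0$ and $a+b>0$ is a small improvement, since the paper only asserts that $\lambda$ lies in $[0,1]$.
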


\begin{proof}
See the Appendix.
\end{proof}

\begin{corollary}\label{result:specialcases}
If Assumptions \ref{assumption:exo}, \ref{assumption:exclu} and \ref{assumption:main} hold, then:
\begin{equation}
\text{(i) If } p_F^1 = p_F^2, \ \tau_{\mathrm{DIIV}} = \tau_C
\text{ and (ii) If } p_C^1 = p_C^2, \ \tau_{\mathrm{DIIV}} = \tau_F.
\end{equation}
\end{corollary}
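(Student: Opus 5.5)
The corollary follows directly from Proposition~\ref{result:maininden}. Under Assumptions~\ref{assumption:exo}, \ref{assumption:exclu} and \ref{assumption:main}, that proposition gives $\tau_{\mathrm{DIIV}} = \lambda \tau_C + (1-\lambda)\tau_F$ with
\[
\lambda = \frac{p_C^1 - p_C^2}{(p_C^1 - p_C^2)-(p_F^1 - p_F^2)}.
\]
So the plan is simply to evaluate $\lambda$ in each of the two cases, checking first that the denominator stays nonzero.

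For (i), suppose $p_F^1 = p_F^2$. The denominator reduces to $p_C^1 - p_C^2$. Relevance (Assumption~\ref{assumption:main}(ii)) states $(p_C^1 - p_C^2) + (p_F^2 - p_F^1) > 0$, and with the second term zero this gives $p_C^1 - p_C^2 > 0$. Hence $\lambda = 1$ and $\tau_{\mathrm{DIIV}} = \tau_C$.

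For (ii), suppose $p_C^1 = p_C^2$. The numerator of $\lambda$ vanishes, while the denominator equals $p_F^2 - p_F^1$. By the same relevance condition this is strictly positive, so $\lambda = 0$ and $\tau_{\mathrm{DIIV}} = \tau_F$.

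As a cross-check, one can argue directly from Lemma~\ref{result:rffs}. There $RF_1 - RF_2 = (p_C^1-p_C^2)\tau_C - (p_F^1-p_F^2)\tau_F$ and $FS_1 - FS_2 = (p_C^1-p_C^2) - (p_F^1-p_F^2)$. In case (i) the $\tau_F$ term drops and the ratio is $\tau_C$. In case (ii) the $\tau_C$ term drops, and the ratio is $-(p_F^1-p_F^2)\tau_F / \bigl(-(p_F^1-p_F^2)\bigr) = \tau_F$.

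There is no genuine obstacle. The only point needing care is well-definedness: the two cases must not make $FS_1 = FS_2$. This is exactly where relevance is invoked, since it forces the remaining difference to be strictly positive in each case.
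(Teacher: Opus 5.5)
Your proposal is correct and follows the paper's own argument, which simply reads off $\lambda = 1$ in case (i) and $\lambda = 0$ in case (ii) from the expression for $\lambda$ in Proposition~\ref{result:maininden}. Your explicit use of relevance to keep the denominator nonzero is sound. It is also already implied by Assumption~\ref{assumption:main} in the paper's proof of that proposition, which notes the denominator is strictly positive.
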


\begin{proof}
See the Appendix.
\end{proof}

Note that, under standard IV terminology and assumptions (i.e., monotonicity and the interpretation that instruments act as encouragements), a special case of part (i) is obtained when $p_F^1=p_F^2=0$.

The DIIV estimand can be interpreted as a LATE that mixes the effects on potential complier (type $C$) and potential defier (type $F$) units whose behavior shifts when moving from one framing to the other. The weights in Proposition \ref{result:maininden} are determined by how the change in framing alters the shares of contingent compliers and defiers. Thus, $\tau_{\mathrm{DIIV}}$ captures a LATE that is directly tied to the policy margin of modifying how a nudge is framed.

Note that both instruments $z_j$ could be pooled into a single instrument across the entire population $I=I_1\cup I_2$ by extending the assignment $z_{i,j}=0$ to units in the other subpopulation. In that case, defining $z_{\mathrm{pool}}=z_1+z_2$, the standard IV estimand would be
\begin{equation}
\tau_{\mathrm{pool}}
=
\frac{RF_1 + RF_2}{FS_1 + FS_2}
=
\frac{(p_C^1+p_C^2)\tau_C - (p_F^1+p_F^2)\tau_F}
     {(p_C^1+p_C^2)-(p_F^1+p_F^2)},
\end{equation}
which is a weighted but non-convex combination of $\tau_C$ and $\tau_F$. Therefore, pooling alone does not resolve the bias induced by defiers.

Instead, consider a \emph{mistakenly} coded instrument $z_2^+ = 1 - z_2$ (i.e., a toggle) and pool it with $z_1$. By flipping the directive of the second instrument, compliers become defiers and defiers become compliers relative to that framing. This change in roles enables a differential comparison across framings that restores identification.\footnote{This role reversal is analogous to the solution of the previously mentioned two-guardian riddle in Footnote \ref{fn:riddle}. Asking one guardian about the answer the other would give effectively flips the directive: truth-telling is converted into lying and vice versa. In the present setting, recoding the instrument as $z_2^+=1-z_2$ similarly toggles the behavioral mapping of the instrument, transforming compliers into defiers and defiers into compliers relative to the original framing.}

\begin{proposition}\label{result:fliploss}
Consider $z_2^+ = 1 - z_2$. Then
\begin{equation}
RF_2^+ = - RF_2 \text{ and } FS_2^+ = - FS_2.
\end{equation}
Define the pooled-and-flipped instrument $z_{\mathrm{pool\&flip}} = z_1 + z_2^+$. The standard IV estimand using $z_{\mathrm{pool\&flip}}$ satisfies
\begin{equation}
\tau_{\mathrm{pool\&flip}}
=
\frac{RF_1 - RF_2}{FS_1 - FS_2}
=
\tau_{\mathrm{DIIV}}.
\end{equation}
\end{proposition}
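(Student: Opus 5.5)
The plan is to first establish the two sign-flip identities directly from the definitions, and then interpret the IV estimand for the pooled-and-flipped instrument. For the flipped instrument $z_2^+ = 1 - z_2$ within $I_2$, the event $\{z_{i,2}^+=1\}$ coincides with $\{z_{i,2}=0\}$, and vice versa. Therefore
\[
RF_2^+ = \mathbb{E}[y_i\mid z_{i,2}=0]-\mathbb{E}[y_i\mid z_{i,2}=1] = -RF_2 .
\]
Verbatim the same computation with $d_i$ in place of $y_i$ gives $FS_2^+=-FS_2$. No assumption beyond the definitions is needed here. Equivalently, in the language of Lemma \ref{result:rffs}, the contingent compliers $C_2^C$ now act as defiers relative to $z_2^+$ and $F_2^F$ act as compliers, so $FS_2^+ = p_F^2-p_C^2$ and $RF_2^+ = p_F^2\tau_F - p_C^2\tau_C$.

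For the pooled estimand, I will treat $z_{\mathrm{pool\&flip}}$ in the same way the paper treats $z_{\mathrm{pool}}$ just above. The instrument on $I_1$ is $z_1$, and the instrument on $I_2$ is $z_2^+$. In the pooled representation, the paper writes $\tau_{\mathrm{pool}}=(RF_1+RF_2)/(FS_1+FS_2)$. The justification is that the two subpopulations are ex ante identical and assignment probabilities are equal across them (Assumption \ref{assumption:exo} together with the design), so the pooled reduced form and first stage are sums (up to a common scaling factor that cancels in the ratio) of the subpopulation-specific ones. I will replicate that argument with $z_2^+$ in place of $z_2$. This yields
\[
\tau_{\mathrm{pool\&flip}} = \frac{RF_1+RF_2^+}{FS_1+FS_2^+}.
\]
Substituting the identities from the first step then gives $(RF_1-RF_2)/(FS_1-FS_2)=\tau_{\mathrm{DIIV}}$. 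The denominator is nonzero by Assumption \ref{assumption:main}, via Lemma \ref{result:rffs}: $FS_1-FS_2=(p_C^1-p_C^2)+(p_F^2-p_F^1)>0$.

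The main obstacle is making the pooling step rigorous. The pooled instrument mixes a unit's own assignment with subpopulation membership, so the pooled Wald ratio equals the sum form only under a convention: that pooling is by subpopulation with equal sizes and equal assignment probabilities, so that the common normalizing constants cancel. Note that flipping changes which units are coded $1$ in $I_2$ (assignment probability $q$ becomes $1-q$). My first attempt would be to define the pooled estimand as the paper implicitly does, by stacking the subpopulation contrasts, possibly with subpopulation fixed effects. If needed, I would handle the case $q\neq 1/2$ by viewing the estimand as a 2SLS with subpopulation indicators as controls, under which each subpopulation's contrast enters additively with equal weight.
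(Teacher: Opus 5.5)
Your sign-flip step is exactly the paper's proof. The paper derives $RF_2^+=-RF_2$ from $\{z_2^+=1\}=\{z_2=0\}$, notes the same for $FS_2^+$, and leaves the pooled equality implicit, by analogy with the $\tau_{\mathrm{pool}}$ display. You rightly see that this analogy is not automatic, but your handling of it has a gap.

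With equal-sized subpopulations and a common $q$, $\tau_{\mathrm{pool}}$ decomposes because both arms of $z_{\mathrm{pool}}$ contain $I_1$ and $I_2$ units in the same proportions. After flipping, the arm $\{z_{\mathrm{pool\&flip}}=1\}$ draws a fraction $q$ of its units from $I_1$, while the arm $\{z_{\mathrm{pool\&flip}}=0\}$ draws a fraction $1-q$. So the justification you want to replicate (equal assignment probabilities across subpopulations) is false for $z_2^+$, and your argument only goes through as stated when $q=1/2$.

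Your fallback of adding subpopulation indicators as controls does not repair this. It is a true result: it is exactly Proposition~\ref{result:2sls}, since $w_i=z_{i,1}$ on $I_1$ and $w_i=1-z_{i,2}$ on $I_2$, i.e.\ $w_i=z_{\mathrm{pool\&flip},i}$ with $h_i$ the subpopulation dummy. But it is a different estimand from the uncontrolled IV ratio this statement is about.

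The missing ingredient, which closes the step for every $q\in(0,1)$ without controls, is that the $z=0$ cells coincide across frames. At $z=0$ the take-up rule reduces to $\eta_i\ge\underline v$ for both $C$ and $F$ types, types $A$ and $N$ are frame-free, and the subpopulations are ex ante identical. Hence $a:=\mathbb{E}[y_i\mid i\in I_1,\,z_{i,1}=0]=\mathbb{E}[y_i\mid i\in I_2,\,z_{i,2}=0]$, and likewise for $d_i$.

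Write $b_j:=\mathbb{E}[y_i\mid i\in I_j,\,z_{i,j}=1]$, so that $RF_j=b_j-a$, and use equal subpopulation sizes (already presupposed by the $\tau_{\mathrm{pool}}$ formula). Then $\Pr(z_{\mathrm{pool\&flip}}=1)=1/2$ and
\[
\mathbb{E}[y_i\mid z_{\mathrm{pool\&flip}}=1]-\mathbb{E}[y_i\mid z_{\mathrm{pool\&flip}}=0]
=\bigl(q\,b_1+(1-q)\,a\bigr)-\bigl((1-q)\,a+q\,b_2\bigr)
=q\,(RF_1-RF_2).
\]
The same computation for $d_i$ gives $q\,(FS_1-FS_2)$. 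The factor $q$ cancels, and the plain Wald ratio equals $\tau_{\mathrm{DIIV}}$.

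If the two baselines $a_1,a_2$ differed, the numerator would acquire an extra term $(1-2q)(a_2-a_1)$; that is exactly why $q=1/2$ looked necessary to you. With this in hand, your nonzero-denominator remark via Lemma~\ref{result:rffs} and Assumption~\ref{assumption:main} completes the proof.
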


\begin{proof}
See the Appendix.
\end{proof}


\section{Estimation and Inference}

The DIIV estimand is an unbiased convex combination of $\tau_C$ and $\tau_F$. From its definition, it can be estimated using differences in sample means:
\begin{equation}
\widehat{\tau}_{\mathrm{DIIV}}
=
\frac{(\bar y_1^1-\bar y_1^0)-(\bar y_2^1-\bar y_2^0)}
     {(\bar d_1^1-\bar d_1^0)-(\bar d_2^1-\bar d_2^0)},
\end{equation}
where $\bar y_j^z$ denotes the sample mean of $y_i$ in subpopulation $j$ with $z_{i,j}=z$, and similarly for $\bar d_j^z$. While this estimator is straightforward to compute, inference based directly on this ratio is less convenient.

The next result shows that the DIIV estimand admits a numerically equivalent two-stage least squares (2SLS) representation, which allows for conventional inference using standard tools.

\begin{proposition}\label{result:2sls}
Let $h_i\in\{0,1\}$ indicate the framing, with $h_i=1$ if unit $i\in I_1$ and $h_i=0$ if unit $i\in I_2$. Let $z_{\mathrm{pool}}=z_1+z_2\in\{0,1\}$ denote the within-frame assignment. Construct the composite instrument
\begin{equation}
w_i = z_{\mathrm{pool},i} h_i + (1-z_{\mathrm{pool},i})(1-h_i).
\end{equation}
Estimate the 2SLS system
\begin{align}
\text{First stage:}\quad
& d_i = \alpha_d + \beta w_i + \gamma_d h_i + e_{d,i}, \\
\text{Second stage:}\quad
& y_i = \alpha_y + \tau \hat d_i + \gamma_y h_i + e_{y,i}.
\end{align}
Then, the 2SLS estimand satisfies
\begin{equation}
\tau_{\mathrm{2SLS}}
=
\frac{RF_1 - RF_2}{FS_1 - FS_2}
=
\tau_{\mathrm{DIIV}}.
\end{equation}
\end{proposition}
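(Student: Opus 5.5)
The plan is to treat the 2SLS system as just-identified IV with one excluded instrument $w_i$, one endogenous regressor $d_i$, and the exogenous control $h_i$ (plus a constant). By Frisch--Waugh--Lovell, the 2SLS coefficient on $d_i$ equals the ratio of the coefficients on $w_i$ in two regressions that both include $h_i$ and a constant: the reduced form of $y_i$ on $(1,w_i,h_i)$ and the first stage of $d_i$ on $(1,w_i,h_i)$. Thus
\[
\tau_{\mathrm{2SLS}}=\frac{\Cov(\tilde w_i,y_i)}{\Cov(\tilde w_i,d_i)},
\]
where $\tilde w_i$ is the residual of $w_i$ after projecting on $(1,h_i)$. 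The work then reduces to computing these two population coefficients.

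The key observation is that $w_i$ equals $z_{i,1}$ on $I_1$ (where $h_i=1$) and equals $1-z_{i,2}=z_2^+$ on $I_2$ (where $h_i=0$). In other words, $w$ is exactly the pooled-and-flipped instrument of Proposition~\ref{result:fliploss}, but with a frame indicator added. Because $h_i$ is binary and is included, the projection of $w_i$ on $(1,h_i)$ is the within-frame mean of $w_i$. Hence $\tilde w_i = w_i-\mathbb E[w_i\mid h_i]$, so
\[
\Cov(\tilde w_i,y_i)=\sum_{h\in\{0,1\}}\Pr(h_i=h)\,\mathrm{Var}(w_i\mid h_i=h)\,\big(\mathbb E[y_i\mid w_i=1,h]-\mathbb E[y_i\mid w_i=0,h]\big),
\]
and the same formula holds with $y_i$ replaced by $d_i$. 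Within $h=1$ the bracket equals $RF_1$, and within $h=0$ it equals $RF_2^+=-RF_2$; the analogous statements give $FS_1$ and $-FS_2$. These follow from Proposition~\ref{result:fliploss} and Assumption~\ref{assumption:exo}, since random assignment makes the conditional contrasts equal the population reduced forms and first stages.

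Next I use the design: assignment probabilities are equal across the two subpopulations. Write $q=\Pr(z_{i,j}=1)$. Then $\mathrm{Var}(w_i\mid h)=q(1-q)$ in both frames, because flipping $z$ preserves the Bernoulli variance. The weights $\Pr(h_i=h)\,q(1-q)$ therefore differ across frames only through the frame shares $\Pr(h_i=1)$ and $\Pr(h_i=0)$. Here I expect the main obstacle. If the two subpopulations have unequal sizes, the ratio becomes
\[
\frac{\rho RF_1-(1-\rho)RF_2}{\rho FS_1-(1-\rho)FS_2},
\]
with $\rho=\Pr(h_i=1)$, rather than the unweighted $(RF_1-RF_2)/(FS_1-FS_2)$. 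I would handle this by invoking the design assumption that $I_1$ and $I_2$ are ex ante identical, read as equal-sized splits so that $\rho=1/2$. With $\rho=1/2$ the common factor $\tfrac12 q(1-q)$ cancels from numerator and denominator, giving $(RF_1-RF_2)/(FS_1-FS_2)$, which is $\tau_{\mathrm{DIIV}}$ by definition. I would note explicitly that this equal-split condition is needed. An alternative that avoids it would require interacting the instrument with $h_i$, which the stated specification does not do.

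Finally, Assumption~\ref{assumption:main} together with Lemma~\ref{result:rffs} gives $FS_1-FS_2=(p_C^1-p_C^2)+(p_F^2-p_F^1)>0$. The first-stage coefficient on $w_i$ is therefore nonzero, so the 2SLS estimand is well defined, and Proposition~\ref{result:maininden} then delivers its convex-combination interpretation.
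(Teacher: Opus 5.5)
Your proof is correct and follows essentially the paper's route. The paper omits this proof on the grounds that it mirrors Proposition~\ref{result:joint2sls}, whose argument is Frisch--Waugh--Lovell residualization on the controls followed by a cell-by-cell computation of the conditional covariances of the instrument with $y_i$ and $d_i$; this is exactly what you do with $(1,h_i)$.

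Your caveat is genuine, and the paper leaves it implicit. In Proposition~\ref{result:joint2sls} only one control cell has variation in the instrument, so the common factor cancels for arbitrary cell probabilities. Here both frames contribute, so the identity needs $\Pr(h_i=1)\,\mathrm{Var}(w_i\mid h_i=1)=\Pr(h_i=0)\,\mathrm{Var}(w_i\mid h_i=0)$, i.e.\ equal frame sizes given equal assignment probabilities. Adding $z_{\mathrm{pool},i}$ as a further control would saturate the first stage and remove this requirement.
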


\begin{proof}
See the Appendix.
\end{proof}

This 2SLS representation is not unique. However, it is convenient and elegant. The instrument $w_i$ can be written as $w_i = 1 - (z_{\mathrm{pool},i} \oplus h_i)$, where $\oplus$ denotes the \textit{exclusive-or operator}.\footnote{The role of the exclusive-or (XOR) operator mirrors the logical structure underlying the two-guardian riddle from Footnote \ref{fn:riddle}. The solution is achieved by effectively applying an XOR operation: truthfulness and falsity cancel out because the answer depends on whether exactly one of the two conditions holds. Asking one guardian what the other would say toggles the truth value twice, producing a determinate answer. Analogously, the XOR operator here formalizes the idea that alignment versus misalignment between the pooled instrument and the framing matters only through their exclusive disagreement, allowing the directive to be flipped in a way that restores identification.} It equals one when the directive of the pooled instrument aligns with the framing and zero otherwise. This representation provides a convenient implementation of DIIV via standard 2SLS. Although elegant, this particular 2SLS construction is not robust to asymmetries or correlations in instrument assignment. In the next section, we show that while the DIIV estimand remains valid in more general settings, the 2SLS representation requires appropriate adjustments.

\section{General Framework}\label{sec:general}

We now extend the analysis to a broader setting in which two binary instruments $(z_1,z_2)\in\{0,1\}^2$ may operate jointly, each combination occurring with probability $q_{z_1z_2}$, and their intended directive for treatment take-up is explicit. To interpret how each instrument influences behavior, we associate to each $z_j$ a \emph{frame}
\begin{equation*}
f_j = (m_j, s_j),    
\end{equation*}

where $m_j$ captures the salient \emph{attribute} emphasized by the frame and $s_j\in\{-1,+1\}$ records the directive orientation: $s_j=+1$ for \textit{encouragement} and $s_j=-1$ for \textit{discouragement}.\footnote{This decomposition aligns with the distinction, common in information design and framing models, between the \emph{content} of information and its
\emph{directional} implications for choice. In Bayesian persuasion models, a sender selects an information structure that determines which attributes or dimensions of the decision problem become salient, while posterior beliefs (and hence actions) respond endogenously to this structure \citep{KamenicaGentzkow2011}. Similarly, framing models allow behavior to vary
depending on which attribute of an otherwise identical choice environment is emphasized, even when objective payoffs are unchanged \citep{TverskyKahneman1981,DellaVigna2009}. In our setting, $m_j$ captures the attribute or informational dimension activated by the instrument, while $s_j$ captures the directive orientation along that dimension (encouragement versus
discouragement). This separation allows instruments to differ in \emph{what} aspect of the choice they target, independently of \emph{how} they push behavior, a distinction that underlies the construction of DIIV based on the exclusive-or ($\oplus$) operator in Proposition \ref{result:2sls}.} We also note that since the directive's sign matters, we no longer refer to units in populations $C$ and $F$ as potential compliers and potential defiers; instead, we will refer to them as \textit{persuation-prone} and \textit{reactance-prone}, respectively.

The attribute $m_j$ can be very general, ranging from monetary incentives to purely persuasive content. In experimental settings, $s_j$ reflects the researcher’s intended direction; in observational settings, it reflects the theoretically inferred direction. We assume that $s_j$ is unambiguous and well defined. Under monotonicity, orientation plays no identifying role.\footnote{For example, by replacing an instrument $z$ by $1-z$, the sign of the ratio $\FS/\RF$ would remain unchanged.} However, when monotonicity fails, the directive becomes essential for interpreting the behavior of those who take the treatment as intended by the directive $s$ (i.e., persuation-prone) and those who do the opposite (i.e., reactance-prone).\footnote{This is particularly important when marginal effects differ between persuasion-prone and reactance-prone units. If marginal effects were homogeneous across types, the directive would not matter for interpretation.}

Each attribute $m_j$ determines how strongly units respond to the presence of the instrument, while the directive $s_j$ determines the sign of that response. For each behavioral type $\theta\in\{C,F\}$, let $\kappa_\theta(m_j)$ measure the intensity with which type $\theta$ \emph{responds to} attribute $m_j$. Persuasion-prone types ($C$) respond positively to the attribute and therefore satisfy $\kappa_C(m_j)\ge 0$, whereas reactance-prone types ($F$) respond negatively and satisfy $\kappa_F(m_j)\le 0$.

Treatment take-up follows a threshold rule in which the directed frames from both instruments jointly influence behavior:
\begin{equation}\label{eq:threshold2}
\kappa_\theta(m_1)\,s_1 z_1
\;+\;
\kappa_\theta(m_2)\,s_2 z_2
\;+\;
\eta_i
\;\ge\;
\underline v,
\end{equation}
where $\eta_i$ is an idiosyncratic shock. Thus, the frame component $m_j$ governs the strength of the effect, while $s_j$ governs its sign. Units may behave differently depending on their type and the realization of~$\eta_i$.

For each type $\theta\in\{C,F\}$ and instrument $j$, let $\varphi_\theta(j)$ denote the probability that the threshold is met when comparing $(z_j{=}1,z_{-j}{=}0)$ to $(z_j{=}0,z_{-j}{=}0)$. From the threshold rule in Equation \eqref{eq:threshold2}, the difference between these two assignments is governed solely by the increment $\kappa_\theta(m_j)s_j$ associated with setting $z_j$ from $0$ to $1$ while holding $z_{-j}=0$ fixed. Thus, type $\theta$ units respond to
instrument $j$ precisely when the idiosyncratic shock $\eta_i$ lies in the interval
\[
\eta_i \in 
\begin{cases}
[\;\underline v - \kappa_\theta(m_j)s_j,\;\underline v\;), 
& \text{if }\kappa_\theta(m_j)s_j > 0 ,\\[0.4em]
(\;\underline v,\;\underline v - \kappa_\theta(m_j)s_j\;], 
& \text{if }\kappa_\theta(m_j)s_j < 0 .
\end{cases}
\]
Then, $\varphi_\theta(j)$ is the probability mass of $\eta_i$ contained in this interval. The behavioral share of type $\theta$ shifted by instrument $j$ is then
\[
p_\theta^j = \pi_\theta\, \varphi_\theta(j),
\]
where $\pi_\theta$ is the population share of type $\theta$. The probability $p_\theta^j$ extends the notions of contingent compliance and defiance shares from the parallel-frame setting and characterizes how each instrument shifts persuasion-prone and reactance-prone groups under its specific attribute and directive. These quantities serve as the building blocks for defining first-stage contrasts and for establishing identification of the general DIIV estimand.

To estimate the causal effect of treatment take-up $d_i$ on outcome $y_i$ in the
presence of opposing behavioral responses, we impose a set of assumptions that
parallel the standard IV framework while replacing monotonicity with a
directional structure on compliance and defiance. These assumptions are denoted
with primes ($\prime$) to distinguish them from the baseline IV assumptions.

\setcounter{assumptionprime}{0}

\begin{assumptionprime}\label{assumption:exo_prime}
\textbf{Exogeneity:} The joint instrument assignment is independent of all potential outcomes and
potential treatment take-up states:
\[
(z_1,z_2)
\;\perp\;
\bigl(
Y_i(0),Y_i(1),
D_{i,1}(0),D_{i,1}(1),
D_{i,2}(0),D_{i,2}(1)
\bigr).
\]
\end{assumptionprime}

\begin{assumptionprime}\label{assumption:exclu_prime}
\textbf{(i) Exclusion.} Conditional on treatment take-up and within each
single-edge comparison, the potential outcome does not depend on the instrument
assignment:
\[
Y_i(d \mid z_{i,j}=1, z_{i,-j}=0)
=
Y_i(d \mid z_{i,j}=0, z_{i,-j}=0),
\]

for $d\in\{0,1\}$ and $j\in\{1,2\}$.

\medskip
\textbf{(ii) Type-invariant expected marginal effects.} Let
\begin{equation}
\tau_C^{\mathrm{abs}}
:= 
\mathbb{E}\!\left[Y_i(1)-Y_i(0)\mid C\right] \text{ and }
\tau_F^{\mathrm{abs}}
:= 
\mathbb{E}\!\left[Y_i(1)-Y_i(0)\mid F\right].
\end{equation}
do not depend on $j$.
\end{assumptionprime}

\begin{assumptionprime}\label{assumption:main_prime}
\textbf{(i) Opposing shifts.} The two instruments differentially shift behavioral
types in opposite directions:
\[
p_C^1 \ge p_C^2 \text{ and }
p_F^1 \le p_F^2.
\]

\medskip
\textbf{(ii) Relevance.} At least one of these inequalities holds strictly:
\[
p_C^1 - p_C^2 + p_F^2 - p_F^1 > 0.
\]
\end{assumptionprime}

Note that Assumption \ref{assumption:main_prime} is equivalent to:

\begin{equation}
\kappa_C(m_1)s_1
\;\ge\;
\kappa_C(m_2)s_2
\;\ge\;
0
\;\ge\;
\kappa_F(m_2)s_2
\;\ge\;
\kappa_F(m_1)s_1.
\end{equation}

Let $\RF_{j}^{(z)}$ and $\FS_{j}^{(z)}$ be the reduced form and first stage with respect to instrument $j$ by keeping $z_{-j}=z$ constant:

\begin{equation}
RF_{j}^{(z)} := \mathbb{E}[y_{i} \mid z_{i,j}=1,z_{i,-j}=z]-\mathbb{E}[y_{i} \mid z_{i,j}=0,z_{i,-j}=z],
\end{equation}
and
\begin{equation}
FS_{j}^{(z)} := \mathbb{E}[d_{i} \mid z_{i,j}=1,z_{i,-j}=z]-\mathbb{E}[d_{i} \mid z_{i,j}=0,z_{i,-j}=z].
\end{equation}

Next, we extend DIIV to the oriented edge differences case.

\begin{definition*}
Define the DIIV estimand
\begin{equation}
\tau_{\DIIV}
~:=~
\frac{s_1\,\RF_{1}^{(0)} - s_2\,\RF_{2}^{(0)}}
     {s_1\,\FS_{1}^{(0)} - s_2\,\FS_{2}^{(0)}}.
\end{equation}
\end{definition*}

\begin{proposition}\label{result:general-diiv}
If Assumptions \ref{assumption:exo_prime}, \ref{assumption:exclu_prime}, and
\ref{assumption:main_prime} hold, then the DIIV estimand identifies
\begin{equation}
\tau_{DIIV} = \lambda \tau_C^{\mathrm{abs}} + (1-\lambda)\tau_F^{\mathrm{abs}},
\end{equation}
where
\begin{equation}
\lambda
=
\frac{p_C^1 - p_C^2}
     {(p_C^1 - p_C^2)-(p_F^1 - p_F^2)}
\in [0,1].
\end{equation}
\end{proposition}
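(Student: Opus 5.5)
The plan is to mirror the proof of Proposition \ref{result:maininden}. First I establish an oriented analogue of Lemma \ref{result:rffs} for the single-edge contrasts at $z_{-j}=0$, and then I substitute it into the definition of $\tau_{\DIIV}$.

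\textbf{Step 1: decompose the first stage by type and shock region.} Fix $j$ and compare $(z_j,z_{-j})=(1,0)$ with $(0,0)$. By Assumption \ref{assumption:exo_prime}, the conditional expectations equal expectations over the same population distribution of types, shocks and potential outcomes. Hence $\FS_j^{(0)}$ equals the mass of units whose take-up switches from $0$ to $1$ minus the mass switching from $1$ to $0$. By the threshold rule \eqref{eq:threshold2}, units of types $A$ and $N$ do not switch.

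For type $\theta\in\{C,F\}$, switching happens exactly when $\eta_i$ lies in the interval displayed before the definition of $\varphi_\theta(j)$. The direction of the switch is the sign of $\kappa_\theta(m_j)s_j$. Since $\kappa_C\ge0\ge\kappa_F$, the sign for $C$ is $s_j$ and the sign for $F$ is $-s_j$ (whenever the mass is nonzero). This gives
\[
\FS_j^{(0)} = s_j\,(p_C^j - p_F^j).
\]

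\textbf{Step 2: the reduced form.} Using Assumption \ref{assumption:exclu_prime}(i), $y_i$ changes only for switchers, by $\pm\bigl(Y_i(1)-Y_i(0)\bigr)$. Averaging within each type's switching region and using Assumption \ref{assumption:exclu_prime}(ii) gives
\[
\RF_j^{(0)} = s_j\,(p_C^j\tau_C^{\mathrm{abs}} - p_F^j\tau_F^{\mathrm{abs}}).
\]

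\textbf{Main obstacle.} The delicate point is the identification of the switcher-region average with $\tau_\theta^{\mathrm{abs}}$. Assumption \ref{assumption:exclu_prime}(ii) only speaks of type-level means. I would read it, as the paper implicitly does in Lemma \ref{result:rffs}, as saying that the expected marginal effect within type $\theta$ does not vary with the $\eta$-region selected by the instrument; equivalently, $Y_i(1)-Y_i(0)$ is mean-independent of $\eta_i$ within type. I would state this interpretation explicitly at that step.

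\textbf{Step 3: combine.} Multiplying by $s_j$ and using $s_j^2=1$ gives
\[
s_1\FS_1^{(0)}-s_2\FS_2^{(0)} = (p_C^1-p_C^2)-(p_F^1-p_F^2),
\]
\[
s_1\RF_1^{(0)}-s_2\RF_2^{(0)} = (p_C^1-p_C^2)\tau_C^{\mathrm{abs}}-(p_F^1-p_F^2)\tau_F^{\mathrm{abs}}.
\]
By Assumption \ref{assumption:main_prime}(ii), the denominator is strictly positive, so the ratio is well defined. The ratio equals $\lambda\tau_C^{\mathrm{abs}}+(1-\lambda)\tau_F^{\mathrm{abs}}$, where $1-\lambda = -(p_F^1-p_F^2)/\bigl[(p_C^1-p_C^2)-(p_F^1-p_F^2)\bigr]$.

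Finally, Assumption \ref{assumption:main_prime}(i) makes both $p_C^1-p_C^2$ and $p_F^2-p_F^1$ nonnegative. Their sum is the positive denominator, so $\lambda\in[0,1]$.
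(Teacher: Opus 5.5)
Your proposal is correct and follows essentially the same route as the paper's proof: decompose each edge contrast at $z_{-j}=0$ by type and by the $\eta$-region of switchers, express the reduced form and first stage through $p_\theta^j$ and $\tau_\theta^{\mathrm{abs}}$, difference across the two instruments, and read off $\lambda$. Your explicit factor $s_j$ in $\FS_j^{(0)}=s_j(p_C^j-p_F^j)$ and $\RF_j^{(0)}=s_j(p_C^j\tau_C^{\mathrm{abs}}-p_F^j\tau_F^{\mathrm{abs}})$ is the correct bookkeeping, whereas the paper's displayed intermediate formulas silently take $s_j=+1$; likewise, the within-type mean-independence you flag is exactly the step the paper uses implicitly when it equates the switcher-region contrast with $\mathbb{E}[Y(1)-Y(0)\mid\theta]$.
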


\begin{proof}
See the Appendix.
\end{proof}

Note that the influence of the vector $(z_1,z_2)=(1,1)$ on $d$ and $y$ does not enter the DIIV estimand; DIIV is an \emph{edge} contrast with respect to $(0,0)$. That is, it compares $(1,0)$ with $(0,0)$ and $(0,1)$ with $(0,0)$. The following Lemma is useful for understanding how these
edge contrasts are represented under directive orientations.

\begin{lemma}\label{result:aligned-edge-identity}
Let $s_1,s_2\in\{-1,+1\}$ be known constants and define the aligned instruments
\[
\tilde z_j := \frac{1-s_j}{2}+s_j z_j,\qquad j\in\{1,2\}.
\]
For $a,b\in\{0,1\}$ define the aligned-cell means
\[
y_{ab}:=\mathbb{E}[y_i\mid \tilde z_{1i}=a,\tilde z_{2i}=b],
\]
and 
\[
d_{ab}:=\mathbb{E}[d_i\mid \tilde z_{1i}=a,\tilde z_{2i}=b].
\]
Then,
\begin{equation}\label{eq:aligned-edge-diiv}
\frac{(y_{10}-y_{00})-(y_{01}-y_{00})}{(d_{10}-d_{00})-(d_{01}-d_{00})}
=
\frac{s_1\,\RF_{1}^{(0)}-s_2\,\RF_{2}^{(0)}}
     {s_1\,\FS_{1}^{(0)}-s_2\,\FS_{2}^{(0)}}.
\end{equation}
\end{lemma}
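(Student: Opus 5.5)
The proof is a bookkeeping argument: express each aligned-cell mean as an original-cell mean, then check the identity separately for each of the four sign configurations $(s_1,s_2)\in\{-1,+1\}^2$. No behavioral structure (Assumptions \ref{assumption:exo_prime}--\ref{assumption:main_prime}) is needed, because the statement is purely about conditional means.

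\textbf{Step 1: relabel the cells.} Write $Y_{z_1z_2}:=\mathbb{E}[y_i\mid z_{1i}=z_1,z_{2i}=z_2]$ and $D_{z_1z_2}$ analogously, so that
\[
\RF_1^{(0)}=Y_{10}-Y_{00},\qquad \RF_2^{(0)}=Y_{01}-Y_{00},
\]
and likewise for $\FS$. The map $z_j\mapsto\tilde z_j=\frac{1-s_j}{2}+s_jz_j$ is the identity when $s_j=+1$ and the toggle $z_j\mapsto 1-z_j$ when $s_j=-1$. It is a bijection on $\{0,1\}$, so $y_{ab}=Y_{z_1z_2}$ where $z_j=a$ or $1-a$ according to the sign of $s_j$. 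Note also that the left-hand numerator of \eqref{eq:aligned-edge-diiv} collapses to $y_{10}-y_{01}$, and the denominator collapses to $d_{10}-d_{01}$.

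\textbf{Step 2: the equal-sign cases.} If $s_1=s_2=+1$, the aligned and original cells coincide. Both sides then equal $(Y_{10}-Y_{01})/(D_{10}-D_{01})$.

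If $s_1=s_2=-1$, aligned $(1,0)$ is original $(0,1)$ and aligned $(0,1)$ is original $(1,0)$. The left side is therefore $(Y_{01}-Y_{10})/(D_{01}-D_{10})$. The right side is
\[
\frac{-(Y_{10}-Y_{00})+(Y_{01}-Y_{00})}{-(D_{10}-D_{00})+(D_{01}-D_{00})},
\]
which is the same expression once the $Y_{00}$ and $D_{00}$ terms cancel. This is the toggling identity of Proposition \ref{result:fliploss} applied to both instruments at once.

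\textbf{Step 3: the mixed-sign cases (main obstacle).} Take $s_1=-1$, $s_2=+1$; the other mixed case is symmetric. Now aligned $(1,0)$ is original $(0,0)$ and aligned $(0,1)$ is original $(1,1)$. The left numerator is $Y_{00}-Y_{11}$, while the right numerator is $2Y_{00}-Y_{10}-Y_{01}$. These agree exactly when
\[
Y_{11}-Y_{10}-Y_{01}+Y_{00}=0,
\]
and the analogous no-interaction condition is needed for $D$. So in this case the identity is not a pure relabeling: the cell $(1,1)$ enters the left side but not the right.

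I would try two resolutions, in order. First, I would check whether the intended reading takes the edge contrasts on the left relative to the \emph{aligned} baseline, i.e.\ the aligned cell corresponding to original $(0,0)$, using single-edge exclusion (Assumption \ref{assumption:exclu_prime}). With that reading the $(1,1)$ cell never enters, and the computation reduces to the Step 2 algebra. If the statement is meant literally, I would instead impose, or derive from the design, the additivity/no-interaction condition on the $(1,1)$ cell. In the parallel design of Section \ref{sec:parallel}, $(1,1)$ has zero probability, so that case has to be excluded or defined away in any event.

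I expect this reconciliation of the mixed-sign case to be the only nontrivial point. Everything else is the sign algebra of Steps 1 and 2.
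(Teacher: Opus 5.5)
Your Steps 1 and 2 are correct, and your Step 3 diagnosis is right. In fact it catches an error in the paper's own proof. The paper specializes its per-instrument identity to $b=0$ and then writes $y_{10}-y_{00}=s_1\RF_1^{(0)}$ and $y_{01}-y_{00}=s_2\RF_2^{(0)}$ for every sign pattern. But the aligned cell $\tilde z_{-j}=0$ is the original cell $z_{-j}=b^z$, and $b^z=1$ when $s_{-j}=-1$. What actually holds is $y_{10}-y_{00}=s_1\RF_1^{(1)}$ when $s_2=-1$, and $y_{01}-y_{00}=s_2\RF_2^{(1)}$ when $s_1=-1$.

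For $s_1=s_2=-1$ both intermediate identities fail in general, but the $(1,1)$ terms cancel in the difference; that cancellation is exactly your Step 2. For $s_1\neq s_2$ nothing cancels, and the lemma is false as stated. Take, for instance, $s_1=+1$, $s_2=-1$ and original cell means $D_{00}=0$, $D_{10}=D_{01}=0.1$, $D_{11}=0.5$, $Y_{00}=Y_{10}=Y_{01}=0$, $Y_{11}=1$. The left side is $(Y_{11}-Y_{00})/(D_{11}-D_{00})=2$, while the right side is $0/0.2=0$.

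The weakness in your proposal is that it hedges where it should conclude. Your first resolution is not available. Re-basing the left-hand contrasts at the aligned cell corresponding to original $(0,0)$ changes the claim rather than proving it. Moreover, Assumption~\ref{assumption:exclu_prime}(i) restricts only the two single-edge comparisons, so it says nothing about the $(1,1)$ cell that enters your left side. The example above arises, e.g., from $y_i=z_{1i}z_{2i}$, which satisfies that assumption.

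Your second resolution is a legitimate added hypothesis, but separate additivity of $Y$ and $D$ is more than you need. Write $N,M$ for the right-hand numerator and denominator, $I_Y:=Y_{11}-Y_{10}-Y_{01}+Y_{00}$, and $I_D$ analogously. The left side then equals $(N+I_Y)/(M+I_D)$ for $(s_1,s_2)=(+1,-1)$ and $(N-I_Y)/(M-I_D)$ for $(-1,+1)$. So the identity holds iff $I_YM=NI_D$ (given nonzero denominators). This condition also covers, e.g., $Y_{ab}$ affine in $D_{ab}$ across the four cells.

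The correct write-up is your Steps 1--2 as a complete proof for $s_1=s_2$. For $s_1\neq s_2$, the statement should either be restricted away or supplemented by the condition $I_YM=NI_D$. The same slip propagates to Step 5 of the proof of Proposition~\ref{result:joint2sls}: with mixed signs, the 2SLS estimand is the Wald ratio between the original cells $(1,1)$ and $(0,0)$, not $\tau_{\DIIV}$.
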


\begin{proof}
See the Appendix.
\end{proof}

Finally, we generalize Proposition \ref{result:2sls} to include asymmetric and joint
distributions of $(z_1,z_2)$ with explicit directives $s_j$ that allow for
encouragements and discouragements.

\begin{proposition}\label{result:joint2sls}
Let $(z_1,z_2)\in\{0,1\}^2$ be instruments with directive orientations $s_1,s_2\in\{-1,+1\}$, where $s_j=+1$ denotes an encouragement directive and $s_j=-1$ a discouragement directive. Let the aligned instruments be $\tilde z_j := \frac{1-s_j}{2}+s_j z_j$,

and define the variables
\begin{eqnarray*}
    x^\Delta &:=& \tilde z_1 - \tilde z_2,\\
    x^\Sigma &:=& \tilde z_1 + \tilde z_2,\\
    x^\times &:=& \tilde z_1 \tilde z_2.
\end{eqnarray*}
Then, the 2SLS of $y_i$ on $d_i$, using instrument $x^\Delta$ and covariates
$(x^\Sigma,x^\times)$ following
\begin{align}
\text{First stage:}\quad
& d_i = \alpha_d + \beta x^\Delta + \gamma_d^\Sigma x^\Sigma
      + \gamma_d^\times x^\times + e_{d,i},\\
\text{Second stage:}\quad
& y_i = \alpha_y + \tau \hat d_i + \gamma_y^\Sigma x^\Sigma
      + \gamma_y^\times x^\times + e_{y,i},
\end{align}
estimates
\[
\tau_{\text{2SLS}} = \tau_{\DIIV}.
\]
\end{proposition}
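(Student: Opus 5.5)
The plan is to exploit the fact that the regressor set $(1, x^\Delta, x^\Sigma, x^\times)$ is \emph{saturated} in the four aligned cells $(\tilde z_1,\tilde z_2)\in\{0,1\}^2$. That reduces both the first stage and the reduced form to exact interpolations of cell means. The 2SLS coefficient then becomes a ratio of cell-mean contrasts, which we match to Lemma \ref{result:aligned-edge-identity}.

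\textbf{Step 1 (just-identified 2SLS as indirect least squares).} There is one endogenous regressor $d_i$, one excluded instrument $x^\Delta$, and the included covariates $(1,x^\Sigma,x^\times)$. The 2SLS estimand therefore equals $\pi/\beta$. Here $\beta$ is the population coefficient on $x^\Delta$ in the first stage, and $\pi$ is the coefficient on $x^\Delta$ in the reduced-form projection of $y_i$ on $(1,x^\Delta,x^\Sigma,x^\times)$. This is standard: partial out the covariates via Frisch--Waugh--Lovell and take the ratio of the two partial covariances with the residualized instrument.

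\textbf{Step 2 (saturation).} The regressor vector $(1,x^\Delta,x^\Sigma,x^\times)$ evaluated at the cells $00,10,01,11$ gives
\[
(1,0,0,0),\quad (1,1,1,0),\quad (1,-1,1,0),\quad (1,0,2,1).
\]
These four vectors are linearly independent. Hence, whenever every $q_{z_1z_2}>0$, the linear projection of any variable on these regressors reproduces its cell means exactly, whatever the (possibly asymmetric or correlated) assignment probabilities. Solving the interpolation equations for $d$ gives
\[
d_{00}=\alpha_d, \qquad d_{10}=\alpha_d+\beta+\gamma_d^\Sigma, \qquad d_{01}=\alpha_d-\beta+\gamma_d^\Sigma,
\]
so that $\beta=(d_{10}-d_{01})/2$. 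The $11$ cell only pins down $\gamma^\times_d$ and does not affect $\beta$. The same computation for $y$ gives $\pi=(y_{10}-y_{01})/2$. If some cell, e.g.\ $11$, has zero probability, we drop the corresponding regressor ($x^\times\equiv 0$ there), and the remaining three regressors are still saturated on the three supported cells, so the same formulas hold. The assumption I would actually need is that the cells $00$, $10$ and $01$ have positive mass, since the DIIV estimand itself uses them.

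\textbf{Step 3 (conclusion).} Taking the ratio,
\[
\tau_{\mathrm{2SLS}}=\frac{y_{10}-y_{01}}{d_{10}-d_{01}}=\frac{(y_{10}-y_{00})-(y_{01}-y_{00})}{(d_{10}-d_{00})-(d_{01}-d_{00})}.
\]
By Lemma \ref{result:aligned-edge-identity}, this equals $\tau_{\DIIV}$. The denominator is nonzero by Assumption \ref{assumption:main_prime}(ii) together with the first-stage computation underlying Proposition \ref{result:general-diiv}, which shows $s_1\FS_1^{(0)}-s_2\FS_2^{(0)}=(p_C^1-p_C^2)-(p_F^1-p_F^2)>0$.

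\textbf{Main obstacle.} The step needing the most care is Step 2. One must verify that saturation holds and that the coefficient on $x^\Delta$ isolates exactly the $10$-versus-$01$ contrast, rather than some $q$-weighted mixture. Saturation is precisely what makes the result robust to the asymmetric and joint assignment distributions that broke the simpler construction of Proposition \ref{result:2sls}. The edge cases with empty cells also need to be handled by dropping collinear regressors.
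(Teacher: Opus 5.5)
\paragraph{Comparison with the paper.} Your reduction of $\tau_{\mathrm{2SLS}}$ to $(y_{10}-y_{01})/(d_{10}-d_{01})$ is correct, and it takes a different route from the paper's. The paper applies Frisch--Waugh--Lovell and residualizes on $(1,x^\Sigma,x^\times)$. In doing so it tacitly uses that this projection equals $\mathbb{E}[\,\cdot\mid x^\Sigma,x^\times]$, which holds because $(x^\Sigma,x^\times)$ takes only three values. It then computes conditional covariances cell by cell. The variable $x^\Delta$ is degenerate when $(x^\Sigma,x^\times)\in\{(0,0),(2,1)\}$. In the edge cell $(1,0)$, both covariances carry the factor $2q_{10}q_{01}/(q_{10}+q_{01})$, which cancels in the ratio. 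You instead note that $(1,x^\Delta,x^\Sigma,x^\times)$ is saturated on the four aligned cells, read off $\beta=(d_{10}-d_{01})/2$ and $\pi=(y_{10}-y_{01})/2$ by interpolation, and conclude by indirect least squares. Your version makes it immediate that $(q_{00},q_{10},q_{01},q_{11})$ play no role, and it handles empty cells explicitly. The paper's version shows where the identifying variation lives and with what weight.

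\paragraph{The last step holds only for $s_1=s_2$.} You share this step with the paper (Lemma~\ref{result:aligned-edge-identity} and Step~5 of its proof). Since $\tilde z_{-j}=0$ means $z_{-j}=(1-s_{-j})/2$, one actually has
\[
y_{10}-y_{00}=s_1\RF_1^{((1-s_2)/2)}, \qquad y_{01}-y_{00}=s_2\RF_2^{((1-s_1)/2)},
\]
not the $(0)$-edges. For $s_1=s_2=-1$, both edges sit at $z_{-j}=1$ and the $(1,1)$ cell cancels in the difference, so the conclusion survives.

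For $s_1=+1$, $s_2=-1$ it does not. Write $m^y_{ab}:=\mathbb{E}[y_i\mid z_{1i}=a,z_{2i}=b]$ and define $m^d_{ab}$ analogously. The aligned cell $10$ is the original cell $(z_1,z_2)=(1,1)$, and the aligned cell $01$ is $(0,0)$. So your ratio is $(m^y_{11}-m^y_{00})/(m^d_{11}-m^d_{00})$, whereas $\tau_{\DIIV}=(m^y_{10}+m^y_{01}-2m^y_{00})/(m^d_{10}+m^d_{01}-2m^d_{00})$. These coincide if $m^v_{11}-m^v_{10}-m^v_{01}+m^v_{00}=0$ for $v\in\{y,d\}$. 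Nothing imposes that, and Assumption~\ref{assumption:exclu_prime} does not even restrict the $(1,1)$ cell, so in general the two differ. The case $s_1=-1$, $s_2=+1$ gives the same mismatch.

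Your remark that DIIV ``uses cells $00$, $10$, $01$'' conflates aligned and original cells in exactly this way. Likewise, your nonzero-denominator argument via $s_1\FS_1^{(0)}-s_2\FS_2^{(0)}$ does not cover the 2SLS first stage when $s_1\neq s_2$. So your argument, like the paper's, proves the proposition for $s_1=s_2$, which covers all the applications. For $s_1\neq s_2$, the statement needs one of three fixes: that restriction, an additivity condition on the $(1,1)$ cell, or a DIIV defined by edges out of the aligned baseline.
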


\begin{proof}
See the Appendix.
\end{proof}

\section{Applications}

We use data from three replication packages from studies regarding education \citep{barrera2019medium}, loan take-up \citep{bertrand2010s}, and vote turnout \citep{gerber2008secrecy,gerber2012ballotsecrecy}.\footnote{The data from the former two studies is publicly available. Although the data to replicate from \cite{gerber2012ballotsecrecy} is publicly available, DIIV requires a disaggregation of the intervention arms. This additional data was generously provided by Gregory Huber.} All studies are structured as large-scale randomized controlled trials (RCTs) with multiple treatment arms, which we regard as instruments. From each study, we take one immediate outcome and use it as treatment take-up ($d$) and estimate how it affected a later outcome ($y$). That is, we assume exclusion from the instruments onto $y$. For each study, we discuss why this is likely the case.


\subsection{Example 1: Cash Transfers and Long-Term Effects on Education}
\label{sec:app1_uncorr}

This section uses data from \cite{barrera2019medium} to apply DIIV to a large conditional cash transfer RCT with three encouragement arms. The study population consists of $N=17{,}309$ students from Colombia. The intervention studies the effect of cash transfers on education outcomes. The three intervention arms are: (G1) offers USD~30 every two months conditional on secondary school attendance; (G2) has the same flow as G1 but with USD~10 retained and paid at the end of the year (revenue-equivalent conditional on compliance); and (G3) has the same flow as G2 plus an additional USD~300 lump-sum if the student enrolls in tertiary education (e.g., college). Table~\ref{tab:repli1} reproduces the immediate and long-term effects on education outcomes following \cite{barrera2019medium}, Tables~3 and~4.

\begin{table}[!ht]
\caption{Replication of \cite{barrera2019medium}}
\begin{center}
    \begin{tabular}{lrccccccccccc}
          &       & \multicolumn{3}{c}{comprehensive exam} &       & \multicolumn{3}{c}{tertiary enrollment} &       & \multicolumn{3}{c}{tertiary graduation} \\
\cmidrule{3-5}\cmidrule{7-9}\cmidrule{11-13}          &       &       &       &       &       &       &       &       &       &       &       &  \\
    G1    &       & 0.02  &       & -0.01 &       & 0.00  &       & -0.00 &       & 0.00  &       & -0.01 \\
          &       & (0.01) &       & (0.01) &       & (0.01) &       & (0.01) &       & (0.01) &       & (0.00) \\
    G2    &       & 0.01  &       & -0.03** &       & 0.01  &       & -0.03** &       & 0.01  &       & 0.00 \\
          &       & (0.01) &       & (0.01) &       & (0.01) &       & (0.01) &       & (0.01) &       & (0.01) \\
    G3    &       &       & 0.01  & 0.12*** &       &       & 0.03* & 0.07*** &       &       & 0.01  & 0.06*** \\
          &       &       & (0.01) & (0.01) &       &       & (0.02) & (0.01) &       &       & (0.01) & (0.01) \\
    cons. &       & 0.68*** & 0.83*** & 0.72*** &       & 0.18*** & 0.24*** & 0.21*** &       & 0.06*** & 0.11*** & 0.07*** \\
          &       & (0.01) & (0.01) & (0.01) &       & (0.01) & (0.01) & (0.00) &       & (0.00) & (0.01) & (0.00) \\
    \midrule
    obs. &       & 10,947 & 2,544 & 17,309 &       & 10,947 & 2,544 & 17,309 &       & 10,947 & 2,544 & 17,309 \\
    \bottomrule
    \bottomrule
    \end{tabular}%

\begin{minipage}{0.8\textwidth}
\footnotesize
\textit{Notes:} Standard errors in parentheses. *** $p<0.001$, ** $p<0.01$, * $p<0.05$.
\end{minipage}\label{tab:repli1}
\end{center}
\end{table}

Because G1 and G2 are nearly identical as incentives, we pool them into a \emph{low-compliance} encouragement $z_2$, while we treat G3 as a \emph{high-compliance} encouragement $z_1$. Although interventions G1 and G2 were independent of intervention G3, we cannot apply the 2SLS equivalence from Proposition~\ref{result:2sls} because of asymmetry. Therefore, we use $x^\Delta = z_1 - z_2$ as an instrument, $x^\Sigma = z_1 + z_2$ as a covariate, and the 2SLS representation in Proposition~\ref{result:joint2sls}. We note that $x^\times = z_1 z_2 = 0$, so it is omitted.

In this setting, a natural interpretation of defiant behavior is one of frustration or disengagement. Since the main goal of the intervention was to promote graduation, students who lost part of the cash transfers due to non-attendance may experience frustration, potentially leading to non-graduation, despite being otherwise likely to graduate and enroll in tertiary education in the absence of the intervention.

It is clear that $z_1$ generates more compliers, as the additional USD~300 payment tied explicitly to tertiary enrollment substantially increases the salience and perceived returns to post-secondary education. In contrast, defiant behavior is likely to be more prevalent under $z_2$, where the opportunity cost of non-compliance is lower. For such individuals, assignment to the low-compliance encouragement may reduce educational effort more severely, due to the absence of the additional bonus conditional on tertiary enrollment that is present in G3.

Since G3 directly affects tertiary enrollment, we cannot use tertiary enrollment as a final outcome variable. Instead, we estimate the effect \emph{tertiary enrollment $\Rightarrow$ tertiary graduation}. The exclusion restriction in this setting requires that the cash transfer interventions affect tertiary graduation only through their impact on tertiary enrollment. This is plausible because the transfers are conditional on contemporaneous school attendance and do not provide direct incentives or information related to tertiary completion beyond inducing enrollment.

Table~\ref{tab:results1} compares three IV specifications: using G3 as the only instrument, using a pooled \emph{any incentive} instrument, and using DIIV. Standard IV estimators based on either G3 alone or pooled instruments yield weak first stages and statistically insignificant second-stage estimates. In contrast, DIIV generates a strong first-stage F-statistic by exploiting the contrast between encouragements and recovers a sizable, positive, and statistically significant effect. The estimated weighted average effect is approximately 75\%, indicating that among students whose tertiary enrollment was induced by the intervention, roughly three-quarters went on to graduate. 

\begin{table}[!ht]
\caption{The effect of secondary school graduation on tertiary school enrollment}
\begin{center}
    \begin{tabular}{lcccc}
          &       & \multicolumn{3}{c}{Method} \\
\cmidrule{3-5}          &       & Single Instrument & Single Instrument & DIIV \\
          &       & $z_1$ & $z_{pool}$ & $x^{\Delta}$ \\
          &       & large incentive G3 & any incentive  & $z_1-z_2$ \\
\cmidrule{3-5}    Tertiary school enrollment (First Stage) &       &       &       &  \\
          &       &       &       &  \\
    instrument &       & 0.03* & -0.00 & 0.04*** \\
          &       & (0.02) & (0.01) & (0.01) \\
    \midrule
    first-stage F &       & 3.900 & 0.303 & 39.67 \\
          &       &       &       &  \\
    \multicolumn{5}{l}{Graduated from Tertiary School (Second Stage)} \\
          &       &       &       &  \\
    secondary graduation &       & 0.36  & -1.20 & 0.75*** \\
          &       & (0.34) & (2.86) & (0.12) \\
    \midrule
    observations &       & 2,544 & 17,309 & 17,309 \\
    \bottomrule
    \bottomrule
    \end{tabular}%
\begin{minipage}{0.6\textwidth}
\footnotesize
\textit{Notes:} Standard errors in parentheses. *** $p<0.001$, ** $p<0.01$, * $p<0.05$.
\end{minipage} \label{tab:results1}
\end{center}
\end{table}


\subsection{Example 2: Credit Market in South Africa}
\label{sec:app2_corr}

We next study advertising in the microfinance sector of South Africa, using data from \cite{bertrand2010s}. This study implements a large randomized controlled trial with $N=58{,}168$ potential borrowers receiving advertising letters that varied along multiple information and persuasion dimensions. The paper studies the effect of these interventions on $(i)$ application to the partner lender, $(ii)$ loan take-up at the partner lender, and $(iii)$ loan take-up from \emph{outside} lenders. Table~\ref{tab:bertrand_replication} reports a slight variation of Table~3 in \cite{bertrand2010s}.

\begin{table}[!ht]
\caption{Replication of \cite{bertrand2010s}}
\begin{center}
    \begin{tabular}{lrccc}
          &       & applied & took loan & took loan outside \\
\cmidrule{3-5}          &       &       &       &  \\
    low interest rate &       & 0.04*** & 0.04*** & 0.00 \\
          &       & (0.00) & (0.00) & (0.00) \\
    rate shown explicitly &       & -0.01*** & -0.01*** & -0.00 \\
          &       & (0.00) & (0.00) & (0.00) \\
    one payment example &       & -0.03*** & -0.02*** & -0.01* \\
          &       & (0.00) & (0.00) & (0.00) \\
    competitor's example shown &       & -0.00 & 0.00  & 0.01 \\
          &       & (0.00) & (0.00) & (0.00) \\
    no explicit usage example for loan &       & 0.01* & 0.00  & -0.00 \\
          &       & (0.00) & (0.00) & (0.00) \\
    photo gender match &       & -0.00 & -0.00 & 0.00 \\
          &       & (0.00) & (0.00) & (0.00) \\
    photo race match &       & 0.00  & -0.00 & -0.00 \\
          &       & (0.00) & (0.00) & (0.00) \\
    cellphone lottery prize mentioned &       & -0.00 & -0.00 & -0.00 \\
          &       & (0.00) & (0.00) & (0.00) \\
    Constant &       & 0.09*** & 0.07*** & 0.22*** \\
          &       & (0.00) & (0.00) & (0.01) \\
    \midrule
    Observations &       & 53,194 & 53,194 & 53,194 \\
    \bottomrule
    \bottomrule
    \end{tabular}%
\begin{minipage}{0.8\textwidth}
\footnotesize
\textit{Notes:} A dummy for interest rate $\leq$ 8\% was used instead of a continuous variable. Standard errors in parentheses. *** $p<0.001$, ** $p<0.01$, * $p<0.05$.
\end{minipage}\label{tab:bertrand_replication}
\end{center}
\end{table}

We regard loan take-up from other lenders as the final outcome $y$ and application to the partner lender as the treatment variable $d$. Under standard substitution logic, and because the vast majority of applicants got the load, an increase in applications to the partner lender should reduce borrowing from competitors, yielding a negative causal effect, which is the hypothesis we seek to test. We use two advertising nudges as instruments. The low-interest-rate indicator $z_1$ is the natural high-compliance encouragement. To construct a lower-compliance instrument, we use whether the letter included a competitor payment example. The letter either displayed or omitted an illustrative competitor repayment schedule; from the example, borrowers could infer an implied competitor interest rate. We define $z_2 :=$ ``competitor example not shown'' as the corresponding encouragement.

In  the field experiment, the competitor payment comparison shown to some borrowers was constructed using a monthly interest rate of 15\%. Evidence from a companion study using the same lender indicates that this rate lies just above the upper bound of the lender’s own pricing distribution \citep[][Appendix Table~2]{karlan2008credit}, rather than reflecting a central or typical benchmark. As a result, the competitor comparison likely overstated the cost of the relevant outside option faced by many borrowers and plausibly operated as a discouraging frame rather than as a neutral source of information.

This interpretation also has implications for behavioral responses when the competitor comparison is removed. If the status quo in lending advertising emphasizes unfavorable alternatives through a high-cost benchmark, then omitting this comparison constitutes an encouragement that moves away from that framing. While some borrowers may become more willing to apply once the discouraging comparison is removed, others may respond in the opposite direction. In particular, borrowers whose take-up was supported by the relative attractiveness created by the high-cost comparison may become less likely to apply once this framing is withdrawn. Thus, in addition to discouragement, the experimental design naturally allows for defiance with respect to the encouragement that omits the competitor example.

We are interested in estimating the causal effect \emph{application to partner lender $\Rightarrow$ obtain a loan from competitors}. The exclusion restriction requires that advertising interventions affect outside borrowing only through their impact on application to the partner lender. This restriction is likely satisfied because the mailings do not reduce search costs, alter eligibility, or expand credit supply in outside markets. Moreover, while one intervention arm includes a competitor payment example from which an interest rate can be inferred, this rate was plausibly perceived as unrepresentative, limiting its ability to directly affect borrowing decisions outside the partner lender.

Table~\ref{tab:results2} compares three IV specifications: using the low-interest-rate indicator alone, using a pooled \emph{any incentive} instrument ($\max\{z_1,z_2\}$), and using DIIV. Using either the strongest incentive alone or pooling instruments yields estimates that are either statistically insignificant or of the wrong sign relative to the hypothesized substitution effect (possibly due to large biases). In contrast, DIIV generates a strong first stage and a negative, economically meaningful estimate, consistent with substitution toward outside lenders when applications to the partner lender increase. In this setting, DIIV improves estimation by separating price-based encouragement from framing-based discouragement, thereby isolating the behavioral margin relevant for competitive borrowing.

\begin{table}[!ht]
\caption{Effect of application to the partner lender on taking a loan from competitors}
\begin{center}
    \begin{tabular}{lrccc}
          &       & \multicolumn{3}{c}{Method} \\
\cmidrule{3-5}          &       & Single Instrument & Single Instrument & DIIV \\
          &       & $z_1$ & $z_{pool}$ & $x^\Delta$ \\
          &       & low interest & $max\{z_1,z_2\}$ & $z_1 - z_2$ \\
\cmidrule{3-5}    \multicolumn{5}{l}{applied (First Stage)} \\
          &       &       &       &  \\
    instrument &       & 0.04*** & 0.04*** & 0.02*** \\
          &       & (0.00) & (0.00) & (0.00) \\
    \midrule
    First-stage F &       & 319.9 & 225.5 & 78.19 \\
          &       &       &       &  \\
    \multicolumn{5}{l}{took loan from outside (Second Stage)} \\
          &       &       &       &  \\
    applied &       & 0.10  & 0.33** & -0.39* \\
          &       & (0.08) & (0.10) & (0.17) \\
    \midrule
    observations &       & 58,168 & 58,168 & 58,168 \\
    \bottomrule
    \bottomrule
    \end{tabular}%
\begin{minipage}{0.6\textwidth}
\footnotesize
\textit{Notes:} Standard errors in parentheses. *** $p<0.001$, ** $p<0.01$, * $p<0.05$.
\end{minipage}
\label{tab:results2}
\end{center}
\end{table}

\subsection{Example 3: Persistence in Voter Turnout}
\label{sec:app3_voting}

We next study voter participation and persistence in turnout using two closely related field experiments on ballot secrecy and voter mobilization. The first study, \cite{gerber2008secrecy}, implements a large-scale randomized intervention in the context of the 2010 U.S. congressional election ($N=69{,}488$). Households received official letters sent on Secretary of State letterhead that varied in their informational content regarding election administration and ballot secrecy. The second study, \cite{gerber2012ballotsecrecy} ($N=3{,}744$), builds on the same experimental design to examine whether these interventions have persistent effects on participation in the subsequent 2012 elections.

The interventions in \cite{gerber2008secrecy} include letters explicitly addressing concerns about ballot secrecy, alongside two placebo communications that do not mention secrecy.\footnote{There are also letters addressing civic duty, which were not used in \cite{gerber2012ballotsecrecy} and are therefore excluded here.} The short placebo contains minimal text, while the long placebo matches the secrecy letters in length and emphasizes the role of the Secretary of State in election administration without providing reassurance about ballot secrecy. The primary outcome analyzed in that study is turnout in the 2010 congressional election.

Our object of interest follows the focus of \cite{gerber2012ballotsecrecy}. We study the causal effect \emph{voting in 2010 $\Rightarrow$ voting in 2012}, using the randomized ballot-related mailings as instruments. Turnout in 2010 constitutes the treatment variable, while the mailings serve as exogenous sources of variation in participation. Identification therefore relies on an exclusion restriction: the 2010 interventions must affect turnout in 2012 only through induced participation in 2010. Given the one-time nature of the mailing, the two-year gap between elections, and the absence of any reinforcement or forward-looking content, this assumption is plausible. In particular, it is unlikely that participation decisions in 2012 are directly influenced by recall of a ballot-secrecy letter received two years earlier. Instead, persistence in turnout is most naturally attributed to habit formation, consistent with the interpretation emphasized in \cite{gerber2012ballotsecrecy}.

Table~\ref{tab:turnout_replication} replicates the main reduced-form results from \cite{gerber2012ballotsecrecy}, taking the short placebo mailing as the baseline condition. The sample size is smaller due to data availability constraints. We observe positive reduced-form effects of the secrecy interventions on turnout in the 2012 primary election, while no statistically significant effects are detected for the general election.

\begin{table}[!ht]
\caption{Replication of \cite{gerber2012ballotsecrecy}. Reduced form of the 2010 intervention on 2012 voter turnout}
\begin{center}
    \begin{tabular}{lccccc}
          &       & presidential primary & congressional primary & presidential general & index (0 - 3) \\
\cmidrule{3-6}          &       &       &       &       &  \\
    any secrecy &       & 0.007* & 0.012** & 0.021 & 0.041* \\
          &       & (0.003) & (0.004) & (0.016) & (0.018) \\
    female &       & -0.002 & -0.001 & 0.033* & 0.030 \\
          &       & (0.003) & (0.005) & (0.016) & (0.018) \\
    democrat &       & 0.003 & 0.026*** & 0.046* & 0.074*** \\
          &       & (0.002) & (0.006) & (0.018) & (0.020) \\
    republican &       & 0.052*** & 0.072*** & 0.092*** & 0.217*** \\
          &       & (0.011) & (0.013) & (0.026) & (0.036) \\
    family size &       & 0.002 & 0.001 & 0.024** & 0.028** \\
          &       & (0.002) & (0.002) & (0.008) & (0.010) \\
    \midrule
    observations &       & 3,516 & 3,516 & 3,516 & 3,516 \\
    \bottomrule
    \bottomrule
    \end{tabular}%
\begin{minipage}{0.8\textwidth}
\footnotesize
\textit{Notes:} Covariates are gender (female = 1), registered Democrat, registered Republican, family size, and town indicators. Standard errors in parentheses. *** $p<0.001$, ** $p<0.01$, * $p<0.05$.
\end{minipage}
\label{tab:turnout_replication}
\end{center}
\end{table}

We define $z_1$ as an indicator for assignment to any ballot-secrecy letter and $z_2$ as an indicator for assignment to the long placebo mailing. Although there is no $(z_1,z_2)=(1,1)$ group, we exploit contrasts between these interventions using the 2SLS representation in Proposition~\ref{result:joint2sls}. This approach is appropriate because there is no pure no-contact control group, the intervention arms are not parallel in the sense of Section~\ref{sec:parallel}, and assignment probabilities differ across treatments.

While some individuals may be reassured by explicit statements about ballot secrecy and become more likely to vote, others may respond adversely to official communications that heighten the salience of electoral authorities or institutional oversight. These concerns are particularly relevant for the long placebo intervention, which increases the salience of the Secretary of State without addressing secrecy and may discourage participation among individuals with low institutional trust.

As a result, opposing participation shifts across interventions are plausible. Secrecy-focused letters are likely to generate compliers by alleviating privacy concerns, whereas the long placebo may induce defiance by emphasizing institutional authority without reassurance. Pooling these interventions therefore likely aggregates positive and negative first-stage responses, implicitly combining encouragement-induced participation with discouragement-driven abstention.

Table~\ref{tab:results3} compares three IV specifications: using the secrecy intervention alone, using a pooled \emph{any intervention} instrument, and using DIIV. Across all specifications, first-stage F-statistics are modest, reflecting the well-known difficulty of mobilizing voter participation and the substantial loss of statistical power induced by restricting the sample to individuals observed in both the 2010 and 2012 elections. In the DIIV specification, the first-stage F-statistic is approximately 6.8. While below conventional thresholds, this value implies that any finite-sample bias due to weak instruments is limited. In particular, an F-statistic of this magnitude corresponds to a relative bias on the order of 15--20\% of the OLS bias, which is modest in this setting given that OLS estimates of turnout persistence are themselves small and potentially attenuated by measurement error.

\begin{table}[!ht]
\caption{Turnout recurrence}
\begin{center}
    \begin{tabular}{lccccccc}
          &       & \multicolumn{6}{c}{Method} \\
\cmidrule{3-8}          &       & \multicolumn{2}{c}{Single Instrument} & \multicolumn{2}{c}{Single Instrument} & \multicolumn{2}{c}{DIIV} \\
          &       & \multicolumn{2}{c}{$z_1$} & \multicolumn{2}{c}{$z_{pool}$} & \multicolumn{2}{c}{$x^{\Delta}$} \\
          &       & \multicolumn{2}{c}{secrecy} & \multicolumn{2}{c}{secrecy or long placebo} & \multicolumn{2}{c}{$z_1-z_2$} \\
    \multicolumn{8}{l}{turnout in 2010 (First Stage)} \\
          &       &       &       &       &       &       &  \\
    instrument &       & 0.037* & 0.037* & 0.028 & 0.028 & 0.023** & 0.023** \\
          &       & (0.018) & (0.018) & (0.017) & (0.017) & (0.009) & (0.009) \\
    \midrule
    First-stage F &       & 4.391 & 4.391 & 2.617 & 2.617 & 6.762 & 6.762 \\
          &       &       &       &       &       &       &  \\
    \multicolumn{8}{l}{turnout in 2012 (Second Stage)} \\
          &       & presidential & congressional & presidential & congressional & presidential & congressional \\
          &       &       &       &       &       &       &  \\
    turnout in 2010 &       & 0.231 & 0.128 & 0.187 & 0.096 & 0.350* & 0.218* \\
          &       & (0.181) & (0.121) & (0.219) & (0.147) & (0.161) & (0.098) \\
    \midrule
    observations &       & 2,922 & 2,922 & 3,516 & 3,516 & 3,516 & 3,516 \\
    \bottomrule
    \bottomrule
    \end{tabular}%
\begin{minipage}{0.6\textwidth}
\footnotesize
\textit{Notes:} Covariates are gender (female = 1), registered Democrat, registered Republican, family size, and town indicators. Columns using DIIV also include $x^\Sigma = z_1 + z_2$ as a covariate, as stated in Proposition~\ref{result:joint2sls}. Standard errors in parentheses. *** $p<0.001$, ** $p<0.01$, * $p<0.05$.
\end{minipage}
\label{tab:results3}
\end{center}
\end{table}

Turning to the second stage, single-instrument IV estimates are positive but imprecise and sensitive to pooling choices, reflecting the aggregation of heterogeneous behavioral responses. In contrast, DIIV yields larger and statistically significant estimates for both presidential and congressional turnout in 2012. The point estimates imply that voting in 2010 increases the probability of voting again by approximately 35\% in presidential primaries and about 22\% in congressional primaries. In this setting, DIIV improves identification by separating reassurance-based encouragement from authority-salience-based discouragement, thereby isolating the behavioral margin most relevant for turnout persistence.


\section{Simulations}

Finally, we conduct simulations to clarify the interpretation of DIIV relative to standard overidentified IV when compliers and defiers coexist. We consider four simulation environments. Each environment consists of 1{,}000 independent trials, where each trial represents an experiment with a sample size of 10{,}000 observations.

The population is composed of four behavioral types: 10\% always-takers, 10\% never-takers, 40\% persuasion-prone units (potential compliers), and 40\% reactance-prone units (potential defiers). Treatment effects are heterogeneous across types, with potential outcome gains given by $\tau_A = 4$, $\tau_C = 3$, $\tau_F = 2$, and $\tau_N = 1$. Outcome noise is additive and normally distributed with variance one.

Two binary instruments, $z_1$ and $z_2$, are generated by thresholding correlated latent Gaussian variables. Specifically, $e_1 \sim \mathcal{N}(0,1)$, $e_2 = u + \rho e_1$ with $u \sim \mathcal{N}(0,1)$ and $\rho = -0.45$, and $z_j = \mathbbm{1}\{e_j>0\}$ for $j\in\{1,2\}$. Thus, each instrument equals one with probability $1/2$, and their dependence reflects the latent Gaussian correlation.

For always-takers, $d_i=1$, and for never-takers, $d_i=0$. For $\theta \in \{C,F\}$, treatment take-up follows
\[
d_i = \mathbbm{1}\{\kappa_{\theta 1} z_{i,1} + \kappa_{\theta 2} z_{i,2} > \eta_i\},
\]
where $\eta_i \sim \mathcal{N}(0,\sigma^2)$ independently of $(z_1,z_2)$.

We vary two dimensions across environments: (i) \textit{Responsiveness asymmetry:} either persuasion-prone units (type $C$) are more responsive than reactance-prone units (type $F$), or vice versa, implemented through larger absolute values of the corresponding coefficients $\kappa_{\theta j}$; and (ii) \textit{Behavioral overlap:} a high-noise case with $\sigma=2$ and a low-noise case with $\sigma=1$. Lower noise sharpens behavioral responses.

\begin{table}[!htbp]
\centering
\caption{Simulation parameter values across environments}
\label{tab:simparams}
\begin{tabular}{lccccc}
Environment & $\kappa_{C1}$ & $\kappa_{C2}$ & $\kappa_{F1}$ & $\kappa_{F2}$ & $\sigma$ \\
\midrule
Persuasion more responsive (high noise) 
    & 2.0 & 0.5 & -0.2 & -0.5 & 2 \\
Reactance more responsive (high noise) 
    & 0.5 & 0.2 & -0.5 & -2.0 & 2 \\
Persuasion more responsive (low noise) 
    & 2.0 & 0.5 & -0.2 & -0.5 & 1 \\
Reactance more responsive (low noise) 
    & 0.5 & 0.2 & -0.5 & -2.0 & 1 \\
\bottomrule
\bottomrule
\end{tabular}
\end{table}

For each environment, we estimate (i) standard overidentified 2SLS using $(z_1,z_2)$ as instruments and (ii) DIIV implemented via a just-identified 2SLS representation using the difference instrument $x^\Delta = z_1 - z_2$, controlling for $x^\Sigma = z_1 + z_2$ and $x^\times = z_1 z_2$.

Figure \ref{fig:simulation} reports the sampling distributions of both estimators. In all cases, DIIV lies within the expected range bounded by $\tau_F$ and $\tau_C$. In contrast, the overidentified IV estimator may fall outside this range, moving closer to $\tau_F$ when $p_F^2-p_F^1 > p_C^1-p_C^2$, and closer to $\tau_C$ when $p_C^1-p_C^2 > p_F^2-p_F^1$.

\begin{figure}[!htbp]
\begin{subfigure}[c]{0.45\textwidth}
    \centering
    \includegraphics[width=\linewidth]{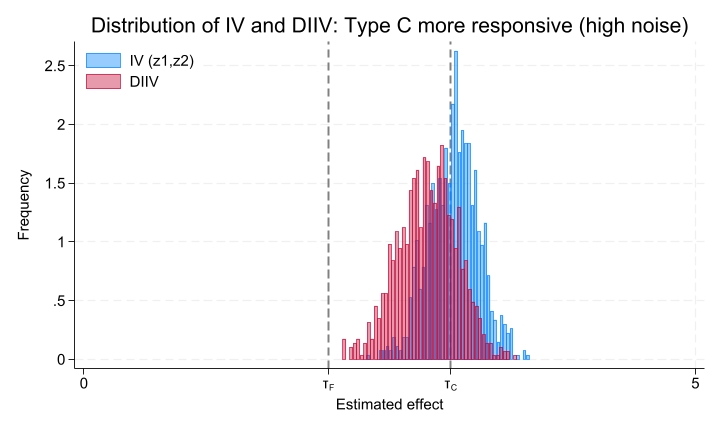}
    \caption{{Persuasion-prone more responsive (high noise).
    $p_C^1=0.137$, $p_C^2=0.039$; $p_F^1=0.016$, $p_F^2=0.039$.
    $\lambda = 0.805$.}}
\end{subfigure}
\hfill
\begin{subfigure}[c]{0.45\textwidth}
    \centering
    \includegraphics[width=\linewidth]{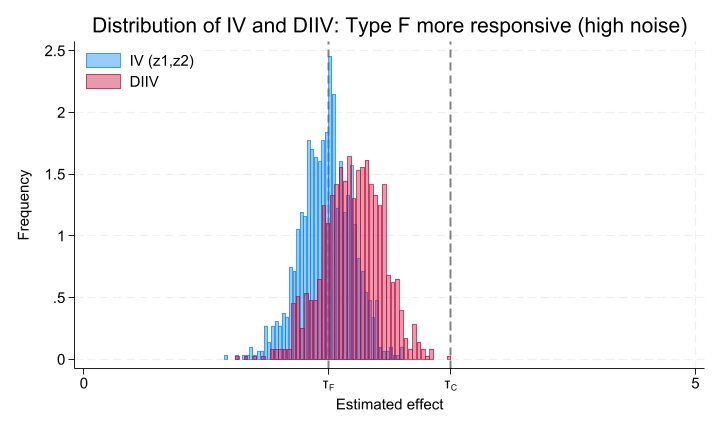}
    \caption{{Reactance-prone more responsive (high noise).
    $p_C^1=0.039$, $p_C^2=0.016$; $p_F^1=0.039$, $p_F^2=0.137$.
    $\lambda = 0.195$.}}
\end{subfigure}


\begin{subfigure}[c]{0.45\textwidth}
    \centering
    \includegraphics[width=\linewidth]{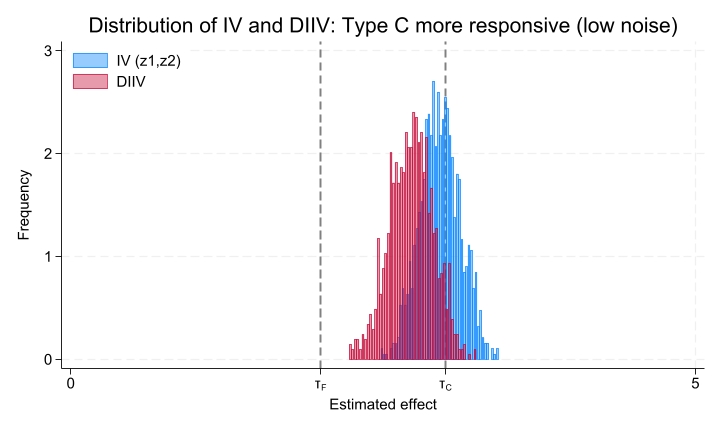}
    \caption{{Persuasion-prone more responsive (low noise).
    $p_C^1=0.191$, $p_C^2=0.077$; $p_F^1=0.032$, $p_F^2=0.077$.
    $\lambda = 0.718$.}}
\end{subfigure}
\hfill
\begin{subfigure}[c]{0.45\textwidth}
    \centering
    \includegraphics[width=\linewidth]{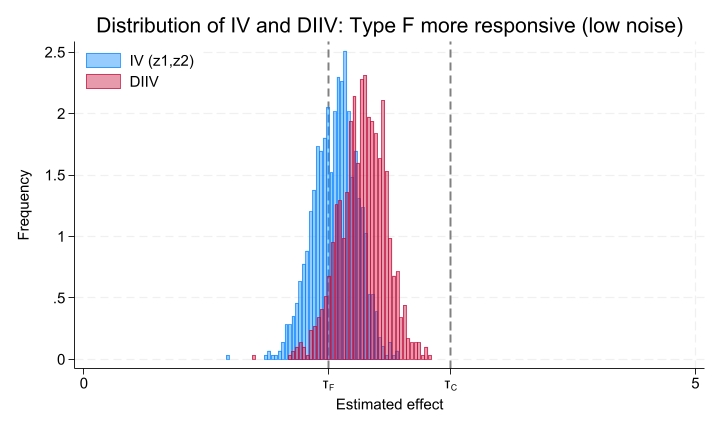}
    \caption{{Reactance-prone more responsive (low noise).
    $p_C^1=0.077$, $p_C^2=0.032$; $p_F^1=0.077$, $p_F^2=0.191$.
    $\lambda = 0.282$.}}
\end{subfigure}

\caption{{Sampling distributions of standard IV using $(z_1,z_2)$ and DIIV across four environments. Panels vary by which behavioral type is more responsive to the instruments, implemented through larger absolute coefficients $\kappa_{\theta}(m_j)$, and by the variance of the latent noise term $\eta_i$. Vertical dashed lines indicate $\tau_F = 2$ and $\tau_C = 3$. Reported probabilities $p_\theta^j$ correspond to type-$\theta$-specific shifts in treatment take-up induced by instrument $j$, and $\lambda$ denotes the DIIV weight on $\tau_C$ in the convex combination $\lambda \tau_C + (1-\lambda)\tau_F$.}}
\label{fig:simulation}
\end{figure}

The simulations highlight an interpretational distinction between DIIV and standard IV. DIIV is explicitly constructed as a convex combination of the marginal treatment effects for compliers and defiers, ensuring that the resulting estimand remains between $\tau_C$ and $\tau_F$ across all configurations. In contrast, IV with multiple instruments forms an implicit linear combination of the instruments’ first stages. When defiers are present, these implicit weights need not be positive, so the overidentified IV estimand may place disproportionate weight on one type and need not lie between $\tau_C$ and $\tau_F$.

Overall, the simulations show that DIIV delivers a stable and interpretable estimand precisely in environments where monotonicity violations are empirically relevant. By making the aggregation of complier and defier effects explicit and tied directly to observable shifts in treatment take-up, DIIV avoids the ambiguity that can arise from implicit weighting in overidentified IV designs.

\section{Conclusions}

DIIV is designed to incorporate the existence of defiers into the instrumental variables framework. By contrasting reduced forms and first stages across instruments with opposing behavioral orientation, DIIV identifies a convex combination of the marginal treatment effects on compliers and defiers. The weights in this combination reflect the relative net shifts in treatment take-up induced by each instrument, yielding a behaviorally interpretable estimand even when monotonicity fails.

Because DIIV admits a 2SLS-equivalent representation through simple data transformations, it can be implemented using standard estimation and inference tools. More broadly, the approach provides a transparent extension of IV to environments where distinct encouragements induce heterogeneous responses across individuals. In such settings, DIIV allows researchers to retain point identification and interpretability without ruling out empirically relevant forms of defiance.

\bibliographystyle{econ}
\bibliography{references}

@article{angrist1996identification,
  title={Identification of causal effects using instrumental variables},
  author={Angrist, Joshua D. and Imbens, Guido W. and Rubin, Donald B.},
  journal={Journal of the American Statistical Association},
  volume={91},
  number={434},
  pages={444--455},
  year={1996}
}

@article{BalkePearl1997,
  title={Bounds on treatment effects from studies with imperfect compliance},
  author={Balke, Alexander and Pearl, Judea},
  journal={Journal of the American Statistical Association},
  volume={92},
  number={439},
  pages={1171--1176},
  year={1997}
}

@article{Barrera2019medium,
  title={Medium- and long-term educational consequences of alternative conditional cash transfer designs: Experimental evidence from Colombia},
  author={Barrera-Osorio, Felipe and Linden, Leigh L. and Saavedra, Juan E.},
  journal={American Economic Journal: Applied Economics},
  volume={11},
  number={3},
  pages={54--91},
  year={2019}
}

@article{bertrand2010s,
  title={What's advertising content worth? Evidence from a consumer credit marketing field experiment},
  author={Bertrand, Marianne and Karlan, Dean and Mullainathan, Sendhil and Shafir, Eldar and Zinman, Jonathan},
  journal={The Quarterly Journal of Economics},
  volume={125},
  number={1},
  pages={263--306},
  year={2010}
}

@book{Brehm1966,
  title={A Theory of Psychological Reactance},
  author={Brehm, Jack W.},
  publisher={Academic Press},
  address={New York},
  year={1966}
}

@misc{christy2024countingdefiersdesignbasedmodel,
  title={Counting Defiers: A Design-Based Model of an Experiment Can Reveal Evidence Beyond the Average Effect},
  author={Christy, Neil and Kowalski, Amanda Ellen},
  year={2024},
  eprint={2412.16352},
  archivePrefix={arXiv},
  primaryClass={econ.EM},
  url={https://arxiv.org/abs/2412.16352}
}

@article{Dahl2023,
  author={Dahl, Christian M. and Huber, Martin and Mellace, Giovanni},
  title={It is never too LATE: A new look at local average treatment effects with or without defiers},
  journal={The Econometrics Journal},
  volume={26},
  number={3},
  pages={378--404},
  year={2023}
}

@article{DellaVigna2009,
  author={DellaVigna, Stefano},
  title={Psychology and Economics: Evidence from the Field},
  journal={Journal of Economic Literature},
  volume={47},
  number={2},
  pages={315--372},
  year={2009}
}

@article{dechaisemartin2017,
  author={de Chaisemartin, Cl\'ement},
  title={Tolerating defiance? Local average treatment effects without monotonicity},
  journal={Quantitative Economics},
  volume={8},
  number={2},
  pages={367--396},
  year={2017}
}

@article{gerber2008secrecy,
  title={Do perceptions of ballot secrecy influence turnout? Results from a field experiment},
  author={Gerber, Alan S and Huber, Gregory A and Doherty, David and Dowling, Conor M and Hill, Seth J},
  journal={American Journal of Political Science},
  volume={57},
  number={3},
  pages={537--551},
  year={2013},
  publisher={Wiley Online Library}
}

@article{gerber2012ballotsecrecy,
  title={Ballot secrecy concerns and voter mobilization: New experimental evidence about message source, context, and the duration of mobilization effects},
  author={Gerber, Alan S. and Huber, Gregory A. and Biggers, Daniel R. and Hendry, David J.},
  journal={American Politics Research},
  volume={42},
  number={5},
  pages={896--923},
  year={2014}
}

@article{hansen1982,
  author={Hansen, Lars Peter},
  title={Large sample properties of generalized method of moments estimators},
  journal={Econometrica},
  volume={50},
  number={4},
  pages={1029--1054},
  year={1982}
}

@article{huber2015,
  author={Huber, Martin and Mellace, Giovanni},
  title={Testing instrument validity for LATE identification based on inequality moment constraints},
  journal={The Review of Economics and Statistics},
  volume={97},
  number={2},
  pages={398--411},
  year={2015}
}

@article{imbens1994,
  author={Imbens, Guido W. and Angrist, Joshua D.},
  title={Identification and estimation of local average treatment effects},
  journal={Econometrica},
  volume={62},
  number={2},
  pages={467--475},
  year={1994}
}

@article{KamenicaGentzkow2011,
  author={Kamenica, Emir and Gentzkow, Matthew},
  title={Bayesian Persuasion},
  journal={American Economic Review},
  volume={101},
  number={6},
  pages={2590--2615},
  year={2011}
}

@article{karlan2008credit,
  title={Credit Elasticities in Less-Developed Economies: Implications for Microfinance},
  author={Karlan, Dean S. and Zinman, Jonathan},
  journal={American Economic Review},
  volume={98},
  number={3},
  pages={1040--1068},
  year={2008}
}

@article{kitagawa2015,
  author={Kitagawa, Toru},
  title={A test for instrument validity},
  journal={Econometrica},
  volume={83},
  number={5},
  pages={2043--2063},
  year={2015}
}

@article{kowalski2023behaviour,
  title={Behaviour within a clinical trial and implications for mammography guidelines},
  author={Kowalski, Amanda E.},
  journal={The Review of Economic Studies},
  volume={90},
  number={1},
  pages={432--462},
  year={2023}
}

@article{manski1990,
  author={Manski, Charles F.},
  title={Nonparametric Bounds on Treatment Effects},
  journal={American Economic Review},
  volume={80},
  number={2},
  pages={319--323},
  year={1990},
  note={Papers and Proceedings}
}

@article{MironBrehm2006,
  title={Reactance theory---40 years later},
  author={Miron, Anca M. and Brehm, Jack W.},
  journal={Zeitschrift f\"ur Sozialpsychologie},
  volume={37},
  number={1},
  pages={9--18},
  year={2006}
}

@article{mogstad2021causal,
  title={The causal interpretation of two-stage least squares with multiple instrumental variables},
  author={Mogstad, Magne and Torgovitsky, Alexander and Walters, Christopher R.},
  journal={American Economic Review},
  volume={111},
  number={11},
  pages={3663--3698},
  year={2021}
}

@article{mourifie2017,
  author={Mourifi\'e, Ismael and Wan, Yao},
  title={Testing local average treatment effect assumptions},
  journal={The Review of Economics and Statistics},
  volume={99},
  number={2},
  pages={305--313},
  year={2017}
}

@misc{noack2021,
  author={Noack, Claudia},
  title={Sensitivity of LATE Estimates to Violations of the Monotonicity Assumption},
  year={2021},
  eprint={2106.06421},
  archivePrefix={arXiv},
  primaryClass={econ.EM},
  url={https://arxiv.org/abs/2106.06421}
}

@techreport{richardson2010,
  author={Richardson, Thomas and Evans, Robin J. and Robins, James M.},
  title={Analysis of the Binary Instrumental Variable Model},
  institution={University of Washington, Center for Statistics and the Social Sciences},
  type={Working Paper},
  number={128},
  year={2010}
}

@article{small2017,
  author={Small, Dylan S. and Tan, Zhiqiang and Lorch, Scott},
  title={Instrumental Variable Estimation with a Stochastic Monotonicity Assumption},
  journal={Statistical Science},
  volume={32},
  number={4},
  pages={561--576},
  year={2017}
}

@article{swanson2015,
  author={Swanson, Sonja A. and Hern\'an, Miguel A. and Miller, Matthew and Robins, James M. and Richardson, Thomas S.},
  title={Definition and evaluation of the monotonicity condition for instrumental variable analyses},
  journal={Epidemiology},
  volume={26},
  number={3},
  pages={414--420},
  year={2015}
}

@article{TverskyKahneman1981,
  title={The framing of decisions and the psychology of choice},
  author={Tversky, Amos and Kahneman, Daniel},
  journal={Science},
  volume={211},
  number={4481},
  pages={453--458},
  year={1981}
}


\appendix
\section{Appendix}
\subsection{Summary of all results with proofs}
\setcounter{equation}{0} %
\setcounter{lemma}{0} %
\setcounter{proposition}{0} %
\setcounter{corollary}{0} 
\renewcommand{\theequation}{A\arabic{equation}}

\begin{lemma}\label{app:rffs}
Under Assumption \ref{assumption:exclu},
\begin{equation}
RF_j = p_C^j \tau_C - p_F^j \tau_F \text{ and }
FS_j = p_C^j - p_F^j.
\end{equation}
\end{lemma}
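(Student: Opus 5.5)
The plan is the standard LATE decomposition, carried out within subpopulation $I_j$ over the partition
\[
I_j=\{A_j,N_j,C_j^C,C_j^A,C_j^N,F_j^F,F_j^A,F_j^N\}.
\]
Exogeneity (Assumption \ref{assumption:exo}) lets us write each conditional mean $\mathbb{E}[y_i\mid z_{i,j}=z]$ as an average of potential outcomes over the whole subpopulation. Exclusion (Assumption \ref{assumption:exclu}(i)) lets $y_i$ depend on $z$ only through $D_{i,j}(z)$. Hence
\[
RF_j=\mathbb{E}\big[(D_{i,j}(1)-D_{i,j}(0))(Y_i(1)-Y_i(0))\big]
\quad\text{and}\quad
FS_j=\mathbb{E}\big[D_{i,j}(1)-D_{i,j}(0)\big].
\]

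Next I will evaluate $D_{i,j}(1)-D_{i,j}(0)$ cell by cell using the threshold rules.
\begin{itemize}
\item It equals $0$ on $A_j$, $N_j$, and on all contingent always-takers and never-takers ($C_j^A,C_j^N,F_j^A,F_j^N$).
\item It equals $+1$ on $C_j^C$, where $\eta_i\in[\underline v-v_C(j),\underline v)$.
\item It equals $-1$ on $F_j^F$, where $\eta_i\in[\underline v,\underline v-v_F(j))$.
\end{itemize}
The unconditional masses of these two cells are $p_C^j=\pi_C\phi_C^C(j)$ and $p_F^j=\pi_F\phi_F^F(j)$. Substituting into the first-stage expression gives $FS_j=p_C^j-p_F^j$ immediately.

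For the reduced form, we get
\[
RF_j=p_C^j\,\mathbb{E}[Y(1)-Y(0)\mid C_j^C]-p_F^j\,\mathbb{E}[Y(1)-Y(0)\mid F_j^F].
\]
It remains to replace these cell-specific effects with $\tau_C$ and $\tau_F$.

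This replacement is the main obstacle. Assumption \ref{assumption:exclu}(ii) only says that type-level effects do not vary with $j$. Equating the effect on the marginal cell $C_j^C$ with the type average $\tau_C$ additionally requires that $Y_i(1)-Y_i(0)$ be mean-independent of the shock $\eta_i$ within type. I would make this explicit by reading the micro-foundation the way the paper intends: $\eta_i$ is an idiosyncratic take-up shock unrelated to gains, so the expected effect depends on the unobserved type $\theta$ and not on the behavioural cell. Under that reading, $\mathbb{E}[Y(1)-Y(0)\mid C_j^C]=\tau_C$ and $\mathbb{E}[Y(1)-Y(0)\mid F_j^F]=\tau_F$, which yields the stated formula for $RF_j$. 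I would flag this as the interpretation of Assumption \ref{assumption:exclu}(ii) on which the lemma rests.
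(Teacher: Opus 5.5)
Your proposal is correct and follows essentially the same route as the paper's proof. The paper likewise splits $I_j$ into types and $\eta$-cells, notes that only $C_j^C$ and $F_j^F$ change take-up, and weights their effects by $p_C^j$ and $p_F^j$; your $\mathbb{E}[(D_{i,j}(1)-D_{i,j}(0))(Y_i(1)-Y_i(0))]$ form is a compact version of this, and your appeal to exogeneity is implicitly needed in the paper's second equality too. The caveat you flag is exactly the step the paper passes over silently when it moves from means conditional on $E_C(j)$ and $E_F(j)$ to $\mathbb{E}[Y(1)-Y(0)\mid\theta=C]$ and $\mathbb{E}[Y(1)-Y(0)\mid\theta=F]$, so stating mean independence of gains from $\eta_i$ within type is the right call, not a gap.
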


\begin{proof}[Proof of Lemma \ref{result:rffs}]
Fix a frame $j\in\{1,2\}$. Recall that $\theta$ defines types and $\eta_i$ further defines behavior, partitioning
\[
I_j = \{A_j,N_j,C_j^C,C_j^A,C_j^N,F_j^F,F_j^A,F_j^N\}.
\]

Given $j$, the instrument $z_j$ changes behavior only on $C_j^C$ and $F_j^F$. Define the events
$E_C(j):=\{\eta_i \in [\underline{v}-v_C(j),\underline{v})\}$ and
$E_F(j):=\{\eta_i\in[\underline{v},\underline{v}-v_F(j))\}$. Then:

\begin{eqnarray*}
RF_j &=& \mathbb{E}[y_i \mid z_{i,j}=1]-\mathbb{E}[y_i \mid z_{i,j}=0]\\
&=& \pi_C \mathbb{E}[y_i \mid z_{i,j}=1,\theta=C]
   + \pi_F \mathbb{E}[y_i \mid z_{i,j}=1,\theta=F] \\
&-& \pi_C \mathbb{E}[y_i \mid z_{i,j}=0,\theta=C]
   - \pi_F \mathbb{E}[y_i \mid z_{i,j}=0,\theta=F]\\
&=& \pi_C \phi_C^C(j)\mathbb{E}[y_i \mid z_{i,j}=1,\theta=C,E_C(j)]
   + \pi_F \phi_F^F(j)\mathbb{E}[y_i \mid z_{i,j}=1,\theta=F,E_F(j)]\\
&-& \pi_C \phi_C^C(j)\mathbb{E}[y_i \mid z_{i,j}=0,\theta=C,E_C(j)]
   - \pi_F \phi_F^F(j)\mathbb{E}[y_i \mid z_{i,j}=0,\theta=F,E_F(j)]\\
&=& \pi_C \phi_C^C(j)\left(\mathbb{E}[y_i \mid \theta=C,d_i=1]
   - \mathbb{E}[y_i \mid \theta=C,d_i=0]\right)\\
&-& \pi_F \phi_F^F(j)\left(\mathbb{E}[y_i \mid \theta=F,d_i=1]
   - \mathbb{E}[y_i \mid \theta=F,d_i=0]\right)\\
&=& \pi_C \phi_C^C(j)\mathbb{E}[Y(1)-Y(0)\mid \theta=C]
   - \pi_F \phi_F^F(j)\mathbb{E}[Y(1)-Y(0)\mid \theta=F]\\
&=& \pi_C \phi_C^C(j)\tau_C - \pi_F \phi_F^F(j)\tau_F\\
&=& p_C^j \tau_C - p_F^j \tau_F.
\end{eqnarray*}

The first equality is the definition of the reduced form. The second equality conditions on types and uses the fact that types $A$ and $N$ do not change behavior; therefore, by part (i) of Assumption \ref{assumption:exclu}, there is no differential in $y_i$. The third equality further conditions on realizations of $\eta_i$ and uses that take-up differs only on $E_C(j)$ and $E_F(j)$. The remaining equalities follow from the take-up rule, the definition of $y_i$, and Assumption \ref{assumption:exclu}. The same steps apply to $FS_j$.
\end{proof}


\begin{proposition}\label{app:maininden}
If Assumptions \ref{assumption:exo}, \ref{assumption:exclu} and \ref{assumption:main} hold, then the DIIV estimand identifies:
\begin{equation}\label{app:eqdiiv}
\tau_{DIIV} = \lambda \tau_C + (1-\lambda)\tau_F,
\end{equation}
where
\begin{equation}
\lambda
=
\frac{p_C^1 - p_C^2}{(p_C^1 - p_C^2)-(p_F^1 - p_F^2)}
\in [0,1].
\end{equation}
\end{proposition}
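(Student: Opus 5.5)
The plan is to substitute the closed forms from Lemma \ref{result:rffs} into the definition of $\tau_{\mathrm{DIIV}}$ and then rearrange. That lemma relies on Assumption \ref{assumption:exo}, so that conditioning on $z_{i,j}$ does not change the type composition or the distribution of $\eta_i$, and on Assumption \ref{assumption:exclu}. It gives, for each $j\in\{1,2\}$, $RF_j = p_C^j\tau_C - p_F^j\tau_F$ and $FS_j = p_C^j - p_F^j$. Part (ii) of Assumption \ref{assumption:exclu} matters here: it guarantees that the same $\tau_C$ and $\tau_F$ appear in both subpopulations, so differences across frames are taken with common coefficients.

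Subtracting the two frames gives
\begin{equation*}
RF_1-RF_2=(p_C^1-p_C^2)\tau_C-(p_F^1-p_F^2)\tau_F,\qquad FS_1-FS_2=(p_C^1-p_C^2)-(p_F^1-p_F^2).
\end{equation*}
Write $a:=p_C^1-p_C^2$ and $b:=p_F^2-p_F^1$. The denominator is then $a+b$, and the numerator is $a\tau_C+b\tau_F$. Dividing yields $\tau_{\mathrm{DIIV}}=\lambda\tau_C+(1-\lambda)\tau_F$ with $\lambda=a/(a+b)$, which is exactly the expression in the statement, since $a+b=(p_C^1-p_C^2)-(p_F^1-p_F^2)$.

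Finally, I verify well-definedness and convexity. Assumption \ref{assumption:main}(i) gives $a\ge 0$ and $b\ge 0$. Part (ii) gives $a+b>0$, so the denominator $FS_1-FS_2$ is nonzero and the ratio is well defined. Since $a$ and $b$ are nonnegative with a positive sum, $\lambda=a/(a+b)\in[0,1]$ and $1-\lambda=b/(a+b)\in[0,1]$. There is no real obstacle. The only point needing care is the sign bookkeeping: the defier term enters the reduced form with a minus sign, and this is what turns $-(p_F^1-p_F^2)\tau_F$ into the nonnegative weight $b\,\tau_F$. I would state this step explicitly.
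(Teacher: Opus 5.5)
Your proposal is correct and follows essentially the same route as the paper: substitute Lemma~\ref{result:rffs} for each frame, difference across frames, and use Assumption~\ref{assumption:main} to obtain a strictly positive denominator and a weight $\lambda\in[0,1]$. Your reparametrization with $a=p_C^1-p_C^2$ and $b=p_F^2-p_F^1$ simply makes explicit the convexity check that the paper leaves implicit. Your remark that exogeneity is already used inside the lemma is, if anything, more precise than how the paper states that lemma.
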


\begin{proof}
Since Assumption \ref{assumption:exclu} holds, by Lemma \ref{result:rffs},
\[
RF_1 - RF_2 = (p_C^1 - p_C^2)\tau_C - (p_F^1 - p_F^2)\tau_F,
\]
and
\[
FS_1 - FS_2 = (p_C^1 - p_C^2)-(p_F^1 - p_F^2).
\]
Since Assumption \ref{assumption:main} holds, $(p_C^1 - p_C^2)-(p_F^1 - p_F^2) >0$, therefore we can define:
\begin{equation}\label{app:lambda}
\lambda
=
\frac{p_C^1 - p_C^2}{(p_C^1 - p_C^2)-(p_F^1 - p_F^2)}
\in [0,1].
\end{equation}
Equation \ref{app:eqdiiv} follows. Moreover, it identifies a causal effect by Assumption \ref{assumption:exo}.
\end{proof}


\begin{corollary}\label{app:specialcases}
If Assumptions \ref{assumption:exo}, \ref{assumption:exclu} and \ref{assumption:main} hold, then:
(i) If $p_F^1 = p_F^2$, $\tau_{DIIV} = \tau_C$.
(ii) If $p_C^1 = p_C^2$, $\tau_{DIIV} = \tau_F$.
\end{corollary}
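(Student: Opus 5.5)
The plan is to read off both special cases from Proposition \ref{result:maininden}, which already expresses $\tau_{\mathrm{DIIV}}=\lambda\tau_C+(1-\lambda)\tau_F$ with
\[
\lambda=\frac{p_C^1-p_C^2}{(p_C^1-p_C^2)-(p_F^1-p_F^2)}.
\]
Under Assumptions \ref{assumption:exo}, \ref{assumption:exclu} and \ref{assumption:main}, the proposition applies directly. It then suffices to evaluate $\lambda$ under each hypothesis.

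For part (i), I will set $p_F^1=p_F^2$. The denominator collapses to $p_C^1-p_C^2$. Relevance in Assumption \ref{assumption:main}(ii) then forces $p_C^1-p_C^2>0$, so the ratio is well defined and $\lambda=1$, giving $\tau_{\mathrm{DIIV}}=\tau_C$.

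For part (ii), I will set $p_C^1=p_C^2$. The numerator is zero, and the denominator equals $p_F^2-p_F^1$, which is strictly positive by relevance. Hence $\lambda=0$ and $\tau_{\mathrm{DIIV}}=\tau_F$.

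As a cross-check that avoids $\lambda$ altogether, I could substitute directly into the expressions for $RF_1-RF_2$ and $FS_1-FS_2$ from Lemma \ref{result:rffs}. In case (ii), for instance, this gives $-(p_F^1-p_F^2)\tau_F/(-(p_F^1-p_F^2))=\tau_F$.

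The only point needing care is confirming that the denominator is nonzero in each case, and Assumption \ref{assumption:main}(ii) handles exactly this. I expect no real obstacle.
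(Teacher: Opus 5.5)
Your proposal is correct and follows the same route as the paper, which simply reads both cases off the expression for $\lambda$ in Proposition~\ref{result:maininden}. Your explicit use of the relevance condition in Assumption~\ref{assumption:main}(ii) to get $\lambda=1$ in case (i) and $\lambda=0$ in case (ii), with a nonzero denominator in each, makes precise what the paper's one-line proof leaves implicit.
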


\begin{proof}
Both results follow directly from Equation \ref{app:lambda}.
\end{proof}


\begin{proposition}\label{app:fliploss}
Consider $z_2^+ = 1 - z_2$. Then
\begin{equation}
RF_2^+ = -RF_2 \text{ and } FS_2^+ = -FS_2.
\end{equation}
Define the instrument $z_{\mathrm{pool\&flip}} = z_1 + z_2^+$. The standard IV estimand using $z_{\mathrm{pool\&flip}}$ satisfies
\begin{equation}
\tau_{\mathrm{pool\&flip}}
=
\frac{RF_1 - RF_2}{FS_1 - FS_2}
=
\tau_{DIIV}.
\end{equation}
\end{proposition}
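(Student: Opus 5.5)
The plan has two parts: first the sign reversal for the flipped instrument, then the pooled estimand.

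\emph{Sign reversal.} Within $I_2$, the events $\{z_{2}^+=1\}$ and $\{z_2=0\}$ coincide, and so do $\{z_2^+=0\}$ and $\{z_2=1\}$. Hence
\[
RF_2^+=\mathbb{E}[y_i\mid z_{i,2}=0]-\mathbb{E}[y_i\mid z_{i,2}=1]=-RF_2 .
\]
The identical argument with $d_i$ in place of $y_i$ gives $FS_2^+=-FS_2$. This needs no assumptions beyond the definitions.

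\emph{Setting up the pooled estimand.} I would interpret $z_{\mathrm{pool\&flip}}$ in the parallel design of Section~\ref{sec:parallel}. There, $z_1$ is defined on $I_1$ and extended by $0$ on $I_2$, and $z_2^+$ is defined on $I_2$ and extended by $0$ on $I_1$, so that $z_{\mathrm{pool\&flip}}\in\{0,1\}$. The standard Wald estimand is
\[
\frac{\mathbb{E}[y\mid z_{\mathrm{pool\&flip}}=1]-\mathbb{E}[y\mid z_{\mathrm{pool\&flip}}=0]}{\mathbb{E}[d\mid z_{\mathrm{pool\&flip}}=1]-\mathbb{E}[d\mid z_{\mathrm{pool\&flip}}=0]}.
\]
Rather than expanding conditional means over the whole of $I$, I would reduce this to subpopulation contrasts, as the paper does for $\tau_{\mathrm{pool}}=(RF_1+RF_2)/(FS_1+FS_2)$. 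That reduction relies on three facts: the two subpopulations are ex ante identical, assignment probabilities are equal across them, and the subpopulation indicator is controlled for (equivalently, the pooled estimand is the frame-weighted average of within-frame contrasts). Under these conditions the numerator equals $c\,(RF_1+RF_2^+)$ and the denominator equals $c\,(FS_1+FS_2^+)$ for a common positive constant $c$ determined by the design shares. This constant cancels in the ratio.

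\emph{Conclusion.} Substituting the sign reversal gives
\[
\tau_{\mathrm{pool\&flip}}=\frac{RF_1-RF_2}{FS_1-FS_2}.
\]
This is the definition of $\tau_{\mathrm{DIIV}}$. Its denominator is nonzero by Assumption~\ref{assumption:main} together with Lemma~\ref{result:rffs}, since $FS_1-FS_2=(p_C^1-p_C^2)+(p_F^2-p_F^1)>0$.

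\emph{Main obstacle.} The hard step is justifying that the pooled Wald ratio decomposes into the sum of the subpopulation reduced forms and first stages with the same weight $c$. Without the frame control, the mean outcomes in $I_1$ and $I_2$ differ at baseline: under $z=0$ the frame-specific take-up shares differ. This difference would contaminate the pooled contrast. I would first handle it by conditioning on the frame, and if necessary I would make explicit that the symmetric, equal-probability design is exactly what makes the weights equal.
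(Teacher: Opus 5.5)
Your sign-reversal step is exactly the paper's proof, and it is in fact all the paper proves. The appendix stops at $RF_2^+=-RF_2$, $FS_2^+=-FS_2$, and leaves $\tau_{\mathrm{pool\&flip}}=(RF_1+RF_2^+)/(FS_1+FS_2^+)$ to the analogy with $\tau_{\mathrm{pool}}$. You are right that this step is not automatic: in $I_2$ the flipped instrument is switched on with probability $1-q$, and its ``off'' arm is the $z_2=1$ arm. But your way of closing it does not prove the stated claim. Controlling for the frame changes the object. Under your extension convention, $z_{\mathrm{pool\&flip}}$ is exactly $w_i$ of Proposition~\ref{result:2sls}, so you are computing that frame-controlled 2SLS coefficient, not the covariate-free IV estimand in the statement. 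Your parenthetical ``equivalently'' is precisely what needs proof. Your obstacle paragraph even asserts the opposite---that without the control the contrast is contaminated---which, if true, would make the proposition false.

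The missing observation is that this premise is wrong in the model. At $z=0$ the take-up rule is $\eta_i\ge\underline v$ for both $C$ and $F$ types under either framing, and $A$, $N$ are unaffected. So with ex ante identical subpopulations, $\mathbb{E}[y_i\mid z_{i,1}=0]=\mathbb{E}[y_i\mid z_{i,2}=0]$, and likewise for $d_i$; framing matters only in the $z=1$ arm. With equal-mass halves and common assignment probability $q$, a direct computation gives
\[
\mathbb{E}[y_i\mid z_{\mathrm{pool\&flip}}=1]-\mathbb{E}[y_i\mid z_{\mathrm{pool\&flip}}=0]
=q\,(RF_1-RF_2)+(2q-1)\bigl(\mathbb{E}[y_i\mid z_{i,1}=0]-\mathbb{E}[y_i\mid z_{i,2}=0]\bigr).
\]
The last term vanishes by this frame invariance, or trivially when $q=1/2$. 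The same holds for $d_i$, so the common factor $q$ cancels and no control is needed. Equal subpopulation masses are genuinely required in either version: with masses $\omega_1\neq\omega_2$ the two contrasts receive different weights. So ``ex ante identical'' must be read to include equal size, an assumption the paper's $\tau_{\mathrm{pool}}$ formula also makes silently.
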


\begin{proof}
\begin{eqnarray*}
RF_2^+ &=& \mathbb{E}[y_i \mid z_2^+=1] - \mathbb{E}[y_i \mid z_2^+=0] \\
&=& \mathbb{E}[y_i \mid z_2=0] - \mathbb{E}[y_i \mid z_2=1] \\
&=& -RF_2.
\end{eqnarray*}
The same argument applies to $FS_2^+$.
\end{proof}


\begin{proposition}\label{app:2sls}
Let $h_i\in\{0,1\}$ indicate the framing, with $h_i=1$ for $j=1$ and $h_i=0$ for $j=2$. Let $z_{\text{pool}}=z_1+z_2\in\{0,1\}$ denote the within-frame assignment. Construct the composite instrument
\[
w_i = z_{\text{pool},i} h_i + (1-z_{\text{pool},i})(1-h_i).
\]
Estimate the 2SLS system
\begin{align}
\text{First stage:}\quad
& d_i = \alpha_d + \beta w_i + \gamma_d h_i + e_{d,i}, \\
\text{Second stage:}\quad
& y_i = \alpha_y + \tau \hat d_i + \gamma_y h_i + e_{y,i}.
\end{align}

Then, the 2SLS estimand equals
\begin{equation}
\tau_{\mathrm{2SLS}}
=
\frac{RF_1 - RF_2}{FS_1 - FS_2}
=
\tau_{\mathrm{DIIV}}.
\end{equation}
\end{proposition}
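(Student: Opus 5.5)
The plan is to treat the regression as a saturated model in the four cells defined by $(h_i,w_i)$ and to read the 2SLS coefficient off the cell means. In the parallel design, $h_i$ identifies the subpopulation. Within $I_1$ we have $w_i=z_{\mathrm{pool},i}=z_{i,1}$, and within $I_2$ we have $w_i=1-z_{\mathrm{pool},i}=1-z_{i,2}=z_{i,2}^+$. So $(h,w)$ takes four values, and each cell maps one-to-one into a cell $(j,z_{i,j})$:
\begin{align*}
(h,w)=(1,1) &\leftrightarrow (j=1,\,z_1=1), & (h,w)=(1,0) &\leftrightarrow (j=1,\,z_1=0),\\
(h,w)=(0,1) &\leftrightarrow (j=2,\,z_2=0), & (h,w)=(0,0) &\leftrightarrow (j=2,\,z_2=1).
\end{align*}

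\textbf{Step 1: Frisch--Waugh--Lovell.} The just-identified 2SLS coefficient with exogenous control $h_i$ equals $\Cov(\tilde w,y)/\Cov(\tilde w,d)$, where $\tilde w$ is the population residual from projecting $w$ on $(1,h)$. Because $h$ is binary, the projection of $w$ on $(1,h)$ is $\mathbb{E}[w\mid h]$, so $\tilde w=w-\mathbb{E}[w\mid h]$.

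\textbf{Step 2: Expand the covariances.} Conditioning on $h$,
\[
\Cov(\tilde w,y)=\sum_{h\in\{0,1\}}\Pr(h)\,\mathrm{Var}(w\mid h)\,\bigl(\mathbb{E}[y\mid w=1,h]-\mathbb{E}[y\mid w=0,h]\bigr).
\]
By the cell correspondence, the $h=1$ contrast is $RF_1$ and the $h=0$ contrast is $RF_2^+=-RF_2$, using Proposition~\ref{result:fliploss}. The same expansion gives $FS_1$ and $-FS_2$ for the first stage.

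\textbf{Step 3: Use symmetry of the design.} The two subpopulations are ex ante identical and the instrument is assigned with the same probability $q$ in each. Hence $\mathrm{Var}(w\mid h)=q(1-q)$ is the same for both $h$. Taking the subpopulations to have equal mass, which is how I read ``ex ante identical'', the weights $\Pr(h)\mathrm{Var}(w\mid h)$ are equal and cancel in the ratio. This leaves $(RF_1-RF_2)/(FS_1-FS_2)=\tau_{\mathrm{DIIV}}$. The denominator is nonzero by Assumption~\ref{assumption:main} together with Lemma~\ref{result:rffs}.

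\textbf{Main obstacle.} The delicate point is Step 3, the equality of the weights. With unequal subpopulation sizes or unequal assignment probabilities, 2SLS instead returns a variance-weighted combination $(\omega_1RF_1-\omega_2RF_2)/(\omega_1FS_1-\omega_2FS_2)$. This is exactly the fragility the text flags and the reason Proposition~\ref{result:joint2sls} is needed later. I would therefore state explicitly that the weights $\omega_h=\Pr(h)\,q_h(1-q_h)$ coincide, and justify this from the parallel design's equal assignment probability and equal partition. If equal partition is not intended, the proof should instead note that only $q_1=q_2$ is needed for the conditional variances to match, and that weighting by $\Pr(h)$ must also be equalized, for example by reweighting.
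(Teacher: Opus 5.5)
Your proof is correct and follows the route the paper intends. The paper omits its proof as ``mirroring'' the Frisch--Waugh--Lovell and cell-wise conditional-covariance argument for Proposition~\ref{result:joint2sls}, which is your Steps 1--2. Your Step 3 caveat is a genuine point that the mirror remark skips. In that proof only the edge cell $(x^\Sigma,x^\times)=(1,0)$ carries instrument variation, so the weights cancel for arbitrary cell probabilities. Here both $h$-cells carry variation, so the identity needs the weights $\Pr(h)\,\mathrm{Var}(w\mid h)$ to coincide across $h$. That means equal subpopulation masses on top of the stated equal assignment probability, an assumption the paper leaves implicit (its footnote defines ``identical'' by type composition and potential outcomes, not size).
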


\begin{proof}
The proof mirrors that of Proposition \ref{result:joint2sls} and is therefore omitted.
\end{proof}


\begin{proposition}\label{app:general-diiv}
If Assumptions \ref{assumption:exo_prime}, \ref{assumption:exclu_prime} and \ref{assumption:main_prime} hold, then the DIIV estimand identifies:
\begin{equation}
\tau_{DIIV} = \lambda \tau_C^{\mathrm{abs}} + (1-\lambda)\tau_F^{\mathrm{abs}},
\end{equation}

where 

\begin{equation}
\lambda = \frac{p_C^1 - p_C^2}{(p_C^1 - p_C^2)-(p_F^1 - p_F^2)}\in [0,1]
\end{equation}

\end{proposition}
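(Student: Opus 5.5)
The plan mirrors the proof of Proposition \ref{result:maininden}. First I derive an analogue of Lemma \ref{result:rffs} for each oriented edge contrast, then I take differences.

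Fix $j\in\{1,2\}$ and compare the cells $(z_j,z_{-j})=(1,0)$ and $(0,0)$. By Assumption \ref{assumption:exo_prime}, conditioning on either cell leaves the joint distribution of types, $\eta_i$ and potential outcomes unchanged. The comparison is therefore a within-population contrast, exactly as in the parallel design. By the threshold rule \eqref{eq:threshold2}, moving from $(0,0)$ to $(1,0)$ adds the increment $\kappa_\theta(m_j)s_j$ to the index and nothing else. Always-takers and never-takers do not change $d_i$, and by Assumption \ref{assumption:exclu_prime}(i) they contribute no difference in $y_i$. Among type-$C$ and type-$F$ units, only those with $\eta_i$ in the interval defining $\varphi_\theta(j)$ switch treatment. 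The direction of the switch is the sign of $\kappa_\theta(m_j)s_j$. Assumption \ref{assumption:main_prime}, in its equivalent form
\[
\kappa_C(m_1)s_1\ge\kappa_C(m_2)s_2\ge0\ge\kappa_F(m_2)s_2\ge\kappa_F(m_1)s_1,
\]
gives the signs: type-$C$ switchers move $0\to1$ and type-$F$ switchers move $1\to0$.

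Repeating the chain of equalities in the proof of Lemma \ref{result:rffs}, and using Assumption \ref{assumption:exclu_prime}(ii) to replace the conditional gains by $\tau_\theta^{\mathrm{abs}}$, I obtain the oriented identities
\[
s_j\RF_j^{(0)} = p_C^j\tau_C^{\mathrm{abs}} - p_F^j\tau_F^{\mathrm{abs}},
\qquad
s_j\FS_j^{(0)} = p_C^j - p_F^j .
\]
The factor $s_j$ is the delicate point. The shares $p_\theta^j$ are defined as nonnegative probability masses, and in the paper's convention the persuasion-prone margin is the one that moves with the directive. Multiplying the raw contrast by $s_j$ expresses it in the directive's orientation: persuasion-prone switchers enter with sign $+$ and reactance-prone switchers with sign $-$. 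I would check this against the four sign cases of $(s_j,\operatorname{sign}\kappa_\theta)$. An equivalent route is to pass to the aligned instruments $\tilde z_j$ of Lemma \ref{result:aligned-edge-identity}, in which every directive becomes an encouragement, and then read the identities off the parallel computation.

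Subtracting the $j=2$ identities from the $j=1$ identities gives
\[
s_1\RF_1^{(0)}-s_2\RF_2^{(0)}=(p_C^1-p_C^2)\tau_C^{\mathrm{abs}}-(p_F^1-p_F^2)\tau_F^{\mathrm{abs}}
\]
and
\[
s_1\FS_1^{(0)}-s_2\FS_2^{(0)}=(p_C^1-p_C^2)-(p_F^1-p_F^2).
\]
By Assumption \ref{assumption:main_prime}(ii) the denominator is strictly positive. Dividing and writing
\[
-(p_F^1-p_F^2)=\bigl[(p_C^1-p_C^2)-(p_F^1-p_F^2)\bigr]-(p_C^1-p_C^2)
\]
yields $\lambda\tau_C^{\mathrm{abs}}+(1-\lambda)\tau_F^{\mathrm{abs}}$ with the stated $\lambda$. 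Since the numerator $p_C^1-p_C^2$ and the other summand $p_F^2-p_F^1$ of the denominator are both nonnegative by Assumption \ref{assumption:main_prime}(i), it follows that $\lambda\in[0,1]$.

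The main obstacle is the orientation bookkeeping in the oriented identities, namely showing that $s_j$ times the raw edge contrast equals $p_C^j-p_F^j$ with nonnegative $p$'s in every sign configuration. The differencing step is routine.
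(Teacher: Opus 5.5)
\textbf{Overall.} Your architecture matches the paper's: an edge-wise analogue of Lemma~\ref{result:rffs} at $z_{-j}=0$, then differencing and dividing, with $\lambda\in[0,1]$ read off Assumption~\ref{assumption:main_prime}. Carrying the factors $s_j$ is actually more faithful to the definition of $\tau_{\DIIV}$ than the paper's proof. That proof differences the raw contrasts $\RF_1^{(0)}-\RF_2^{(0)}$ and $\FS_1^{(0)}-\FS_2^{(0)}$, so it effectively treats $s_1=s_2=+1$.

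\textbf{The gap.} The step you flag as the main obstacle is left open, and as you set it up it is inconsistent. You read the switch directions off the displayed ``equivalent form'' of Assumption~\ref{assumption:main_prime}. In raw $z_j$, that display says type-$C$ switchers move $0\to1$ and type-$F$ switchers move $1\to0$. This gives the raw identity $\FS_j^{(0)}=p_C^j-p_F^j$. Your oriented identity $s_j\FS_j^{(0)}=p_C^j-p_F^j$ is then incompatible with it when $s_j=-1$, unless both sides vanish. In fact, together with $\kappa_C\ge0\ge\kappa_F$, that display forces $\kappa_C(m_j)=\kappa_F(m_j)=0$, hence $p_C^j=p_F^j=0$, whenever $s_j=-1$. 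So it cannot be what orients a genuine discouragement.

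\textbf{The fix.} Use the primitive sign restrictions instead. On the raw edge $(0,0)\to(1,0)$, type-$\theta$ switchers, of mass $p_\theta^j$, change $d_i$ in the direction $\operatorname{sign}(\kappa_\theta(m_j)s_j)$. Since $s_j^2=1$, we have $s_j\operatorname{sign}(\kappa_\theta(m_j)s_j)=\operatorname{sign}\kappa_\theta(m_j)$. This is $+$ for $C$ and $-$ for $F$, so both oriented identities follow in all four sign configurations at once. Assumption~\ref{assumption:main_prime} is needed only afterwards, for the positive denominator and for $\lambda\in[0,1]$.

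\textbf{Two cautions.} First, drop the alternative route through Lemma~\ref{result:aligned-edge-identity}. The aligned origin $(\tilde z_1,\tilde z_2)=(0,0)$ is the raw cell with $z_j=(1-s_j)/2$. So when $s_{-j}=-1$, the aligned edge for instrument $j$ equals $s_j\RF_j^{(1)}$ rather than $s_j\RF_j^{(0)}$, and under the threshold rule these differ in general. (The paper's proof of that lemma takes $b^z=0$ regardless of $s_{-j}$.)

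Second, what the chain from Lemma~\ref{result:rffs} actually produces for type $\theta$ is the switchers' mean gain $\mathbb{E}[Y_i(1)-Y_i(0)\mid\theta_i=\theta,\ \eta_i\in J_\theta(j)]$, with $J_\theta(j)$ the switching interval. Assumption~\ref{assumption:exclu_prime}(ii) only says that $\tau_\theta^{\mathrm{abs}}$ does not depend on $j$, so it does not by itself equate this quantity with $\tau_\theta^{\mathrm{abs}}$. You need something like mean independence of $Y_i(1)-Y_i(0)$ and $\eta_i$ within type. The paper's proof makes the same step implicitly, so this is a shared unstated assumption rather than a flaw peculiar to your argument, but you should state it.
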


\begin{proof}
We follow in a similar case as in the proofs of Lemma \ref{result:rffs} and Proposition \ref{result:maininden}. Indeed, it given $z_{-j}=0$:

\begin{eqnarray*}
RF_{j}^{(0)}
&=& \mathbb{E}[y_{i} \mid z_{i,j}=1,z_{i,-j}=0]-\mathbb{E}[y_{i} \mid z_{i,j}=0,z_{i,-j}=0] \\
&=& \pi_C \mathbb{E}[y_{i} \mid z_{i,j}=1,z_{i,-j}=0,\theta=C] + \pi_F \mathbb{E}[y_{i} \mid z_{i,j}=1,z_{i,-j}=0,\theta=F] \\
&&-\pi_C \mathbb{E}[y_{i} \mid z_{i,j}=0,z_{i,-j}=0,\theta=C] - \pi_F \mathbb{E}[y_{i} \mid z_{i,j}=0,z_{i,-j}=0,\theta=F]\\
&=& \pi_C \varphi_C(j)\,\mathbb{E}[y_{i} \mid z_{i,j}=1,z_{i,-j}=0,\theta=C,E_C(j)] \\
&&-\pi_C \varphi_C(j)\,\mathbb{E}[y_{i} \mid z_{i,j}=0,z_{i,-j}=0,\theta=C,E_C(j)] \\
   &&+ \pi_F \varphi_F(j)\,\mathbb{E}[y_{i} \mid z_{i,j}=1,z_{i,-j}=0,\theta=F,E_F(j)] \\
   &&- \pi_F \varphi_F(j)\,\mathbb{E}[y_{i} \mid z_{i,j}=0,z_{i,-j}=0,\theta=F,E_F(j)] \\
&=& \pi_C \varphi_C(j)\left(\mathbb{E}[y_{i}\mid \theta=C,d_i=1]-\mathbb{E}[y_{i}\mid \theta=C,d_i=0]\right) \\
&&-\pi_F \varphi_F(j)\left(\mathbb{E}[y_{i}\mid \theta=F,d_i=1]-\mathbb{E}[y_{i}\mid \theta=F,d_i=0]\right) \\
&=& \pi_C \varphi_C(j)\left(\mathbb{E}[Y(1)-Y(0)\mid \theta=C]\right)
   - \pi_F \varphi_F(j)\left(\mathbb{E}[Y(1)-Y(0)\mid \theta=F]\right) \\
&=& \pi_C \varphi_C(j)\tau_C - \pi_F \varphi_F(j)\tau_F \\
&=& p_C^j \tau_C - p_F^j \tau_F,
\end{eqnarray*}

where equalities follow the same reasoning as in the proofs of Lemma \ref{result:rffs}, which relies of Assumption \ref{assumption:exclu_prime} and the third inequality is the one that uses the fact that $z_{-j} = 0$ to be able to use the definition of the conditional probabilities $\varphi_C(j)$ and $\varphi_F(j)$. The same applies to $FS_{j}^{(0)} = p_C^j - p_F^j$

Next, it follows that

\[
RF_1^{(0)} - RF_2^{(0)} = (p_C^1 - p_C^2  ) \tau_C - (p_F^1 - p_F^2  ) \tau_F,
\]

and 

\[
FS_1^{(0)} - FS_2^{(0)} = (p_C^1 - p_C^2  ) - (p_F^1 - p_F^2 ).
\]

Since Assumption \ref{assumption:main_prime} holds, $(p_C^1 - p_C^2)-(p_F^1 - p_F^2) >0$, therefore we can define:

\begin{equation*}
\lambda = \frac{p_C^1 - p_C^2}{(p_C^1 - p_C^2)-(p_F^1 - p_F^2)},    
\end{equation*}

Equation \ref{app:eqdiiv} follows. Moreover, it identifies a causal effect by Assumption \ref{assumption:exo_prime}.

\end{proof}


\begin{lemma}\label{app:aligned-edge-identity}
Let $s_1,s_2\in\{-1,+1\}$ be known constants and define the aligned instruments
\[
\tilde z_j := \frac{1-s_j}{2}+s_j z_j,\qquad j\in\{1,2\}.
\]
For $a,b\in\{0,1\}$ define the aligned-cell means
\[
y_{ab}:=\mathbb{E}[y_i\mid \tilde z_{1i}=a,\tilde z_{2i}=b],
\qquad
d_{ab}:=\mathbb{E}[d_i\mid \tilde z_{1i}=a,\tilde z_{2i}=b].
\]

Then,
\begin{equation}
\frac{(y_{10}-y_{00})-(y_{01}-y_{00})}{(d_{10}-d_{00})-(d_{01}-d_{00})}
=
\frac{s_1\,\RF_{1}^{(0)}-s_2\,\RF_{2}^{(0)}}{s_1\,\FS_{1}^{(0)}-s_2\,\FS_{2}^{(0)}}.
\end{equation}
\end{lemma}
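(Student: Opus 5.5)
The plan is to reduce both sides to differences of cell means and match them cell by cell. The first step rewrites the left-hand side. The baseline cell $y_{00}$ (and $d_{00}$) cancels, so the left side equals $(y_{10}-y_{01})/(d_{10}-d_{01})$. We still keep the form $[(y_{10}-y_{00})-(y_{01}-y_{00})]$, because each bracket is meant to be an oriented edge contrast of one instrument.

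The second step translates aligned cells into original cells. Since $\tilde z_j = z_j$ when $s_j=+1$ and $\tilde z_j = 1-z_j$ when $s_j=-1$, the map $z\mapsto \tilde z$ is a bijection of $\{0,1\}^2$. Conditioning on $\tilde z_j=a$ is therefore the same as conditioning on $z_j = \tfrac{1-s_j}{2}+s_j a$. This gives the key single-instrument identity, in the spirit of Proposition \ref{result:fliploss}:
\[
\mathbb{E}[y\mid \tilde z_j=1,\cdot]-\mathbb{E}[y\mid \tilde z_j=0,\cdot]=s_j\bigl(\mathbb{E}[y\mid z_j=1,\cdot]-\mathbb{E}[y\mid z_j=0,\cdot]\bigr).
\]
For $s_j=+1$ this is trivial, and for $s_j=-1$ the two cells swap, exactly as in the proof of $RF_2^+=-RF_2$. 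Hence $y_{10}-y_{00}=s_1\,\RF_1$ and $y_{01}-y_{00}=s_2\,\RF_2$, where each reduced form is taken holding the other instrument at the value corresponding to $\tilde z_{-j}=0$. The same holds with $d$ in place of $y$ and $\FS$ in place of $\RF$.

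The third step handles the held-fixed value of the other instrument, and this is where I expect the main obstacle. When $s_{-j}=+1$, $\tilde z_{-j}=0$ is the same as $z_{-j}=0$, so the contrast is exactly $\RF_j^{(0)}$ and the identity follows by subtracting numerator and denominator. When $s_{-j}=-1$, however, $\tilde z_{-j}=0$ corresponds to $z_{-j}=1$, and the contrast is $\RF_j^{(1)}$ rather than $\RF_j^{(0)}$. A direct check with $s_1=-1,s_2=+1$ shows that the two sides then involve different cells. I would therefore make explicit that the superscript $(0)$ is to be read in the aligned coding, that is, as the edge contrast anchored at the aligned baseline $\tilde z_{-j}=0$; this is the sense in which DIIV is an edge contrast with respect to the baseline cell. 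An alternative would be to restrict the lemma to $s_1=s_2=+1$ or to designs where the relevant cells coincide. With that reading fixed, the identity follows in every sign case.

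The final step checks that both denominators are nonzero, so that the ratio manipulations are legitimate. This follows from Assumption \ref{assumption:main_prime} together with $\FS_j^{(0)}=p_C^j-p_F^j$ from the proof of Proposition \ref{result:general-diiv}.
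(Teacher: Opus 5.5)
You follow the paper's route: invert the aligned coding cell by cell, apply the single-instrument sign flip, and subtract. You also correctly flag the step the paper gets wrong. Its proof introduces the held-fixed value $b^z$ you describe, but at $b=0$ it then evaluates every edge at raw $z_{-j}=0$ regardless of $s_{-j}$, so its edgewise identities are valid only for $s_1=s_2=+1$. Your case analysis, however, does not pin down where the lemma actually fails, and your repair changes the statement rather than proving it.

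Write $Y_{ab}:=\mathbb{E}[y_i\mid z_{1i}=a,z_{2i}=b]$ and $D_{ab}$ analogously for raw cells. An individual edge becoming $-\RF_j^{(1)}$ when $s_{-j}=-1$ does not by itself break the lemma, because only the difference of the two edges enters.

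\begin{itemize}
\item If $s_1=s_2$, the aligned cells $(1,0),(0,1)$ are the raw cells $(1,0),(0,1)$, swapped when both signs are $-1$. Both sides then equal $(Y_{10}-Y_{01})/(D_{10}-D_{01})$, so the lemma holds exactly as written. This includes $s_1=s_2=-1$, where the $(1,1)$ cell cancels.
\item If $s_1\neq s_2$, the aligned cells $(1,0),(0,1)$ are the raw cells $(1,1),(0,0)$ in some order. The left side is $(Y_{11}-Y_{00})/(D_{11}-D_{00})$, while the right side is $(Y_{10}+Y_{01}-2Y_{00})/(D_{10}+D_{01}-2D_{00})$.
\end{itemize}

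In the mixed-sign case the two sides differ in general. For a counterexample, take all $Y_{ab}=0$ except $Y_{11}=1$, with both denominators nonzero: the left side is nonzero and the right side is zero. They coincide, for instance, when the interactions $Y_{11}-Y_{10}-Y_{01}+Y_{00}$ and $D_{11}-D_{10}-D_{01}+D_{00}$ both vanish. So the correct boundary is $s_1=s_2$ versus $s_1\neq s_2$, not $s_1=s_2=+1$. For mixed signs the statement as written is false without an extra condition of this kind, and no proof can close that case.

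Reading the superscript $(0)$ in aligned coding does make the identity true, but only tautologically, and it moves the problem downstream. Proposition~\ref{result:general-diiv} evaluates $\RF_j^{(0)}$ and $\FS_j^{(0)}$ at raw $z_{-j}=0$, which is where $\varphi_\theta(j)$ and the single-edge exclusion restriction are defined. Under your reading, an edge anchored at raw $z_{-j}=1$ has its response interval shifted by $\kappa_\theta(m_{-j})s_{-j}$ and is not covered by those assumptions. The convex-combination interpretation would therefore have to be re-established under assumptions the paper does not state.

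A minor point on your last step: the threshold model gives $s_j\FS_j^{(0)}=p_C^j-p_F^j$, not $\FS_j^{(0)}=p_C^j-p_F^j$, because the directive flips which type moves in which direction. It is this $s_j$-weighted version that Assumption~\ref{assumption:main_prime} makes positive in the denominator.
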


\begin{proof}
Fix $j\in\{1,2\}$. Since $\tilde z_j=\frac{1-s_j}{2}+s_j z_j$ and $s_j\in\{-1,+1\}$, we have
$\tilde z_j=z_j$ if $s_j=+1$ and $\tilde z_j=1-z_j$ if $s_j=-1$. Hence, for any $b\in\{0,1\}$,
\begin{equation}\label{eq:aligned-j-contrast}
\mathbb{E}[y_i\mid \tilde z_{ji}=1,\tilde z_{-j,i}=b]
-
\mathbb{E}[y_i\mid \tilde z_{ji}=0,\tilde z_{-j,i}=b]
=
s_j\Big(
\mathbb{E}[y_i\mid z_{ji}=1, z_{-j,i}=b^z]
-
\mathbb{E}[y_i\mid z_{ji}=0, z_{-j,i}=b^z]
\Big),
\end{equation}
where $b^z\in\{0,1\}$ is the unique value such that $\tilde z_{-j}=b$ (equivalently, $b^z=b$ if
$s_{-j}=+1$ and $b^z=1-b$ if $s_{-j}=-1$). In particular, when $b=0$ we obtain the identities

\begin{align*}
y_{10}-y_{00}
&=
\mathbb{E}[y_i\mid \tilde z_{1i}=1,\tilde z_{2i}=0]
-
\mathbb{E}[y_i\mid \tilde z_{1i}=0,\tilde z_{2i}=0] \\
&=
s_1\Big(
\mathbb{E}[y_i\mid z_{1i}=1,z_{2i}=0]
-
\mathbb{E}[y_i\mid z_{1i}=0,z_{2i}=0]
\Big) \\
&=
s_1\,\RF_{1}^{(0)},
\end{align*}

\begin{align*}
y_{01}-y_{00}
&=
\mathbb{E}[y_i\mid \tilde z_{1i}=0,\tilde z_{2i}=1]
-
\mathbb{E}[y_i\mid \tilde z_{1i}=0,\tilde z_{2i}=0] \\
&=
s_2\Big(
\mathbb{E}[y_i\mid z_{1i}=0,z_{2i}=1]
-
\mathbb{E}[y_i\mid z_{1i}=0,z_{2i}=0]
\Big) \\
&=
s_2\,\RF_{2}^{(0)}.
\end{align*}

Applying the same argument with $y_i$ replaced by $d_i$ yields
\[
d_{10}-d_{00}=s_1\,\FS_{1}^{(0)},
\qquad
d_{01}-d_{00}=s_2\,\FS_{2}^{(0)}.
\]
Substituting these four equalities into the left-hand side of \eqref{eq:aligned-edge-diiv} gives
\[
\frac{(y_{10}-y_{00})-(y_{01}-y_{00})}{(d_{10}-d_{00})-(d_{01}-d_{00})}
=
\frac{s_1\,\RF_{1}^{(0)}-s_2\,\RF_{2}^{(0)}}{s_1\,\FS_{1}^{(0)}-s_2\,\FS_{2}^{(0)}},
\]
which is \eqref{eq:aligned-edge-diiv}.
\end{proof}


\begin{proposition}\label{app:joint2sls}

Let $(z_1,z_2)\in\{0,1\}^2$ and directive orientations $s_1,s_2\in\{-1,+1\}$ be known constants, where
$s_j=+1$ denotes an encouragement directive and $s_j=-1$ a discouragement directive. Define the
\emph{oriented} instruments
\[
\tilde z_j := \frac{1-s_j}{2}+s_j z_j,\ j\in\{1,2\},
\]
and the variables
\begin{eqnarray*}
    x^\Delta &:=& \tilde z_1 - \tilde z_2,\\
    x^\Sigma &:=& \tilde z_1 + \tilde z_2,\\
    x^\times &:=& \tilde z_1 \tilde z_2.
\end{eqnarray*}
Then, the 2SLS of $y_i$ on $d_i$, using instrument $x^\Delta$ and covariates $(x^\Sigma,x^\times)$ following
\begin{align}
\text{First stage:}\quad & d_i = \alpha_d + \beta x^\Delta + \gamma_d^\Sigma x^\Sigma + \gamma_d^\times x^\times + e_{d,i},\\
\text{Second stage:}\quad & y_i = \alpha_y + \tau \hat d_i + \gamma_y^\Sigma x^\Sigma + \gamma_y^\times x^\times + e_{y,i}.
\end{align}
estimates
\[
\tau_{\text{2SLS}}
=
\tau_{\DIIV}.
\]
\end{proposition}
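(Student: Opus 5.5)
The strategy is to exploit that the regressors $(1,x^\Delta,x^\Sigma,x^\times)$ form a \emph{saturated} parametrization of the four aligned cells $(\tilde z_1,\tilde z_2)\in\{0,1\}^2$. Then the first-stage and reduced-form population projections reproduce the cell means $d_{ab}$ and $y_{ab}$ exactly, and the 2SLS estimand reduces to a ratio of the coefficients on $x^\Delta$. Lemma~\ref{result:aligned-edge-identity} then converts that ratio into $\tau_{\DIIV}$.

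First, I will check saturation by writing the regressor values cell by cell. For $(0,0)$ they are $(x^\Delta,x^\Sigma,x^\times)=(0,0,0)$; for $(1,0)$, $(1,1,0)$; for $(0,1)$, $(-1,1,0)$; for $(1,1)$, $(0,2,1)$. The $4\times4$ design matrix with the constant column is invertible, so its column span equals that of the four cell dummies. When every cell has positive probability $q_{ab}>0$, the linear projection of $d$ on these regressors equals $\mathbb{E}[d\mid\tilde z_1,\tilde z_2]$, and similarly for $y$. Solving the four cell equations gives:
\begin{align*}
\alpha_d&=d_{00},\\
\beta+\gamma_d^\Sigma&=d_{10}-d_{00},\\
-\beta+\gamma_d^\Sigma&=d_{01}-d_{00},
\end{align*}
with $\gamma_d^\times$ absorbing the $(1,1)$ cell. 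Hence
\[
\beta=\tfrac12\bigl[(d_{10}-d_{00})-(d_{01}-d_{00})\bigr].
\]
The reduced form of $y$ on the same regressors yields the analogous coefficient $\beta^{y}=\tfrac12[(y_{10}-y_{00})-(y_{01}-y_{00})]$.

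Second, I will invoke the standard just-identified result: with one excluded instrument and common exogenous covariates, the 2SLS coefficient equals the ratio of the reduced-form coefficient to the first-stage coefficient on the excluded instrument. This follows from Frisch--Waugh--Lovell, since after partialling out $(1,x^\Sigma,x^\times)$ the estimand is $\Cov(\tilde x^\Delta,y)/\Cov(\tilde x^\Delta,d)$, where $\tilde x^\Delta$ is the residualized instrument. Therefore $\tau_{\text{2SLS}}=\beta^{y}/\beta$. The factors $\tfrac12$ cancel, leaving the left-hand side of \eqref{eq:aligned-edge-diiv}, and Lemma~\ref{result:aligned-edge-identity} equates this with $\tau_{\DIIV}$. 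The denominator is nonzero because, by the proof of Proposition~\ref{result:general-diiv}, $s_1\FS_1^{(0)}-s_2\FS_2^{(0)}$ is a positive combination of the $p$'s under Assumption~\ref{assumption:main_prime}. I would still state nonzero denominator as the maintained relevance condition.

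The main obstacle is that some aligned cells may be empty, for instance when no unit has $(z_1,z_2)=(1,1)$, as in the applications. In that case the design is collinear ($x^\times\equiv0$), and the saturation argument must be redone on the observed support. With only the three cells $(0,0),(1,0),(0,1)$, the regressors $(1,x^\Delta,x^\Sigma)$ remain saturated, and dropping $x^\times$ leaves the computation of $\beta$ and $\beta^y$ unchanged. I will treat this as a separate case, noting that the $(1,1)$ cell never enters the identifying contrast in either case. The remaining cells $(0,0),(1,0),(0,1)$ must have positive mass, since otherwise $\tau_{\DIIV}$ itself is undefined. Asymmetric $q_{ab}$ do not matter, because the saturated projection recovers conditional means regardless of cell weights. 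This is precisely why the covariates $x^\Sigma,x^\times$ are needed, in contrast to Proposition~\ref{result:2sls}.
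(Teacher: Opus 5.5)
\textbf{Your 2SLS computation.} Up to the aligned-cell ratio $(y_{10}-y_{01})/(d_{10}-d_{01})$ your argument is correct, and it gets there by a different route than the paper.

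\emph{The paper's route.} It partials out $(1,x^\Sigma,x^\times)$ by Frisch--Waugh--Lovell, which amounts to conditioning on the three strata $(x^\Sigma,x^\times)\in\{(0,0),(1,0),(2,1)\}$. It notes that $x^\Delta$ is degenerate in the two outer strata. It then computes the edge-stratum covariances $\frac{2q_{10}q_{01}}{q_{10}+q_{01}}(d_{10}-d_{01})$ and $\frac{2q_{10}q_{01}}{q_{10}+q_{01}}(y_{10}-y_{01})$, whose common weight cancels.

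\emph{Your route.} You use saturation of the four aligned cells to read off $\beta=\tfrac12(d_{10}-d_{01})$ and $\beta^y=\tfrac12(y_{10}-y_{01})$, then apply indirect least squares.

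\emph{Comparison.} Yours is more elementary and yields explicit coefficients. The paper's shows directly why the $q_{ab}$ are irrelevant, and it needs only $q_{10},q_{01}>0$ with no case split. Your case split covers only $x^\times\equiv0$. With $s_1=s_2=-1$ and no $z=(1,1)$ units, the empty aligned cell is $(0,0)$ and the collinearity is instead $x^\Sigma\equiv1+x^\times$. The general fix is one line: the $(1,0)$ and $(0,1)$ cell equations alone force $\beta=\tfrac12(d_{10}-d_{01})$ for every coefficient vector that fits the cell means on the support.

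\textbf{The gap.} The genuine gap is the final appeal to Lemma~\ref{result:aligned-edge-identity}, which holds only when $s_1=s_2$.

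Let $\mu_{ab}:=\mathbb{E}[y_i\mid z_{1i}=a,z_{2i}=b]$ and $\nu_{ab}:=\mathbb{E}[d_i\mid z_{1i}=a,z_{2i}=b]$. For $(s_1,s_2)=(+1,-1)$, the aligned cells $(0,0),(1,0),(0,1)$ are the original cells $(0,1),(1,1),(0,0)$. Hence $y_{10}-y_{00}=\mu_{11}-\mu_{01}=\RF_1^{(1)}\neq s_1\RF_1^{(0)}$. What has actually been shown is
\[
\tau_{\text{2SLS}}=\frac{\mu_{11}-\mu_{00}}{\nu_{11}-\nu_{00}},
\qquad\text{whereas}\qquad
\tau_{\DIIV}=\frac{\mu_{10}+\mu_{01}-2\mu_{00}}{\nu_{10}+\nu_{01}-2\nu_{00}}.
\]
These generally differ. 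For example, $(\nu_{00},\nu_{10},\nu_{01},\nu_{11})=(0.5,0.7,0.4,0.6)$ and $(\mu_{00},\mu_{10},\mu_{01},\mu_{11})=(0,0.6,-0.1,0.1)$ give $1$ versus $5$.

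This also affects your remark that the $(1,1)$ cell never enters. With mixed orientations the original $(1,1)$ cell is an identifying cell, so without a $z=(1,1)$ group the 2SLS is not even defined.

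When $s_1=s_2$, the aligned pair $\{(1,0),(0,1)\}$ is the original pair $\{(1,0),(0,1)\}$, possibly swapped. A swap flips numerator and denominator together, so your proof goes through in that case.

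The paper's own proof has the same slip. The proof of the lemma introduces $b^z$ correctly but then writes $z_{2i}=0$ for $\tilde z_{2i}=0$ regardless of $s_2$, and Step 5 repeats this. To close the gap, do one of the following:
\begin{itemize}
\item restrict the statement to $s_1=s_2$; or
\item anchor the edges at the aligned baseline, replacing $s_1\RF_1^{(0)}-s_2\RF_2^{(0)}$ by $s_1\RF_1^{((1-s_2)/2)}-s_2\RF_2^{((1-s_1)/2)}$, and likewise for $\FS$.
\end{itemize}
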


\begin{proof}
We show that the 2SLS coefficient on $d_i$ in the regression of $y_i$ on $d_i$,
using $x_i^{\Delta}$ as the excluded instrument and
$(1,x_i^{\Sigma},x_i^{\Pi})$ as included exogenous covariates, equals the DIIV
edge contrast
\[
\tau_{DIIV}
=
\frac{\big(y_{10}-y_{00}\big)-\big(y_{01}-y_{00}\big)}
     {\big(d_{10}-d_{00}\big)-\big(d_{01}-d_{00}\big)},
\]
where
\[
y_{z_1z_2}:=\mathbb{E}[y_i\mid \tilde z_{1i}=z_1,\tilde z_{2i}=z_2],
\qquad
d_{z_1z_2}:=\mathbb{E}[d_i\mid \tilde z_{1i}=z_1,\tilde z_{2i}=z_2],
\]
and $\tilde z_{ji}:=\frac{1-s_j}{2}+s_j z_{ji}$ for $j\in\{1,2\}$.
We proceed in five steps.

\paragraph{Step 1: Residualizing with respect to $(x_i^{\Sigma},x_i^{\Pi})$.}
Let $\tilde z_{ji}:=\frac{1-s_j}{2}+s_j z_{ji}$ for $j\in\{1,2\}$.
Define the derived regressors
\[
x_i^{\Delta}:=\tilde z_{1i}-\tilde z_{2i},
\qquad
x_i^{\Sigma}:=\tilde z_{1i}+\tilde z_{2i},
\qquad
x_i^{\Pi}:=\tilde z_{1i}\tilde z_{2i}.
\]
Consider the 2SLS regression of $y_i$ on $d_i$ using $x_i^{\Delta}$ as an
instrument and $(1,x_i^{\Sigma},x_i^{\Pi})$ as controls.

By the Frisch--Waugh--Lovell theorem, the 2SLS coefficient equals the IV
coefficient obtained after residualizing $y_i$, $d_i$, and $x_i^{\Delta}$ with
respect to $(1,x_i^{\Sigma},x_i^{\Pi})$. Let
\[
\tilde y_i:=y_i-\mathbb{E}[y_i\mid x_i^{\Sigma},x_i^{\Pi}],
\quad
\tilde d_i:=d_i-\mathbb{E}[d_i\mid x_i^{\Sigma},x_i^{\Pi}],
\quad
\tilde x_i^{\Delta}:=x_i^{\Delta}-\mathbb{E}[x_i^{\Delta}\mid x_i^{\Sigma},x_i^{\Pi}].
\]
With a single instrument,
\[
\tau_{\mathrm{2SLS}}
=
\frac{\Cov(\tilde x_i^{\Delta},\tilde y_i)}
     {\Cov(\tilde x_i^{\Delta},\tilde d_i)}.
\]

Because residuals have mean zero conditional on $(x_i^{\Sigma},x_i^{\Pi})$,
\begin{align}
\Cov(\tilde x_i^{\Delta},\tilde y_i)
&=\mathbb{E}\!\left[\Cov(x_i^{\Delta},y_i\mid x_i^{\Sigma},x_i^{\Pi})\right], \\
\Cov(\tilde x_i^{\Delta},\tilde d_i)
&=\mathbb{E}\!\left[\Cov(x_i^{\Delta},d_i\mid x_i^{\Sigma},x_i^{\Pi})\right].
\end{align}
Hence,
\begin{equation}
\tau_{\mathrm{2SLS}}
=
\frac{\mathbb{E}\!\left[\Cov(x_i^{\Delta},y_i\mid x_i^{\Sigma},x_i^{\Pi})\right]}
     {\mathbb{E}\!\left[\Cov(x_i^{\Delta},d_i\mid x_i^{\Sigma},x_i^{\Pi})\right]}.
\label{eq:tau2sls-conditional}
\end{equation}

\paragraph{Step 2: Design cells.}
The joint assignment $(\tilde z_{1i},\tilde z_{2i})$ takes values in $\{0,1\}^2$ with
probabilities
\[
q_{00}:=\Pr(\tilde z_1=0,\tilde z_2=0),
\quad
q_{10}:=\Pr(\tilde z_1=1,\tilde z_2=0),
\quad
q_{01}:=\Pr(\tilde z_1=0,\tilde z_2=1),
\quad
q_{11}:=\Pr(\tilde z_1=1,\tilde z_2=1).
\]

The induced values of $(x^{\Delta},x^{\Sigma},x^{\Pi})$ are
\[
\begin{array}{c|c|c|c}
(\tilde z_1,\tilde z_2) & x^{\Delta} & x^{\Sigma} & x^{\Pi} \\ \hline
(0,0) & 0 & 0 & 0 \\
(1,0) & 1 & 1 & 0 \\
(0,1) & -1 & 1 & 0 \\
(1,1) & 0 & 2 & 1
\end{array}
\]

\paragraph{Step 3: Conditional covariance with $d_i$.}
In cells $(x^{\Sigma},x^{\Pi})=(0,0)$ and $(2,1)$, $x^{\Delta}=0$
deterministically, so the covariance is zero.

In the edge cell $(x^{\Sigma},x^{\Pi})=(1,0)$,
\[
\Pr(\tilde z_1=1,\tilde z_2=0\mid x^{\Sigma}=1,x^{\Pi}=0)
=
\frac{q_{10}}{q_{10}+q_{01}},
\qquad
\Pr(\tilde z_1=0,\tilde z_2=1\mid x^{\Sigma}=1,x^{\Pi}=0)
=
\frac{q_{01}}{q_{10}+q_{01}}.
\]
Thus,
\[
\Cov(x^{\Delta},d\mid x^{\Sigma}=1,x^{\Pi}=0)
=
\frac{2q_{10}q_{01}}{(q_{10}+q_{01})^2}(d_{10}-d_{01}).
\]
Taking expectations,
\begin{equation}
\mathbb{E}\!\left[\Cov(x^{\Delta},d\mid x^{\Sigma},x^{\Pi})\right]
=
\frac{2q_{10}q_{01}}{q_{10}+q_{01}}(d_{10}-d_{01}).
\label{eq:EXD}
\end{equation}

\paragraph{Step 4: Conditional covariance with $y_i$.}
By the same argument,
\begin{equation}
\mathbb{E}\!\left[\Cov(x^{\Delta},y\mid x^{\Sigma},x^{\Pi})\right]
=
\frac{2q_{10}q_{01}}{q_{10}+q_{01}}(y_{10}-y_{01}).
\label{eq:EXY}
\end{equation}

\paragraph{Step 5: Equality with the DIIV estimand.}
Substituting \eqref{eq:EXY} and \eqref{eq:EXD} into
\eqref{eq:tau2sls-conditional} yields
\[
\tau_{\mathrm{2SLS}}
=
\frac{y_{10}-y_{01}}{d_{10}-d_{01}}.
\]

Moreover, the definition $\tilde z_j=\frac{1-s_j}{2}+s_j z_j$ implies
\[
y_{10}-y_{00}=s_1\big(\mathbb{E}[y_i\mid z_1=1,z_2=0]-\mathbb{E}[y_i\mid z_1=0,z_2=0]\big)
=s_1\,\RF_1^{(0)},
\]
\[
y_{01}-y_{00}=s_2\big(\mathbb{E}[y_i\mid z_1=0,z_2=1]-\mathbb{E}[y_i\mid z_1=0,z_2=0]\big)
=s_2\,\RF_2^{(0)},
\]
and analogously $d_{10}-d_{00}=s_1\,\FS_1^{(0)}$ and $d_{01}-d_{00}=s_2\,\FS_2^{(0)}$.
Therefore,
\[
\tau_{\DIIV}
=
\frac{s_1\,\RF_1^{(0)}-s_2\,\RF_2^{(0)}}
     {s_1\,\FS_1^{(0)}-s_2\,\FS_2^{(0)}}
=
\frac{(y_{10}-y_{00})-(y_{01}-y_{00})}{(d_{10}-d_{00})-(d_{01}-d_{00})}
=
\frac{y_{10}-y_{01}}{d_{10}-d_{01}},
\]
so $\tau_{\mathrm{2SLS}}=\tau_{DIIV}$.

This equality holds for arbitrary
$q_{00},q_{10},q_{01},q_{11}$ with $q_{10},q_{01}>0$.
\end{proof}

\end{document}